\newtheorem{definition}{Definition}
\newtheorem{lemma}{Lemma}
\newtheorem{corollary}{Corollary}
\newtheorem{theorem}{Theorem}
\newcommand{\reals}{\mathbb{R}}
\newcommand{\barLam}{\bar{\Lambda}}
\newcommand{\poly}{\mathrm{poly}}
\newcommand{\tu}{\tilde{u}}
\newcommand{\tU}{\tilde{U}}
\newcommand{\tO}{\tilde{O}}
\newcommand{\tG}{\tilde{G}}
\newcommand{\tlam}{\tilde{\lambda}}
\newcommand{\tPi}{\tilde{\Pi}}
\newcommand{\hU}{\hat{U}}
\newcommand{\hlam}{\hat{\lambda}}
\newcommand{\cY}{\mathcal{Y}}
\newcommand{\cX}{\mathcal{X}}
\newcommand{\cP}{\mathcal{P}}
\newcommand{\cN}{\mathcal{N}}
\newcommand{\cO}{\mathcal{O}}
\newenvironment{proof}{\par\noindent{\bf Proof\ }}{\hfill\BlackBox\\[2mm]}
\newcommand{\BlackBox}{\rule{1.5ex}{1.5ex}}
\def\moverlay{\mathpalette\mov@rlay}
\def\mov@rlay#1#2{\leavevmode\vtop{%
   \baselineskip\z@skip \lineskiplimit-\maxdimen
   \ialign{\hfil$\m@th#1##$\hfil\cr#2\crcr}}}
\newcommand{\charfusion}[3][\mathord]{
    #1{\ifx#1\mathop\vphantom{#2}\fi
        \mathpalette\mov@rlay{#2\cr#3}
      }
    \ifx#1\mathop\expandafter\displaylimits\fi}
\newcommand{\hPi}{\hat{\Pi}}
\newcommand{\nnz}{\mathrm{nnz}}
\DeclareMathOperator*{\argmin}{arg\,min}
\renewcommand{\eqref}[1]{Equation~(\ref{#1})}
\newcommand{\secref}[1]{Section~\ref{#1}}
\newcommand{\thmref}[1]{Theorem~\ref{#1}}
\newcommand{\lemref}[1]{Lemma~\ref{#1}}
\newcommand{\corref}[1]{Corollary~\ref{#1}}
\newcommand{\algref}[1]{Algorithm~\ref{#1}}
\newcommand{\handout}[5]{
   \renewcommand{\thepage}{#1-\arabic{page}}
   \noindent
   \begin{center}
   \framebox{
      \vbox{
    \hbox to 5.78in { {\bf (67577) Introduction to Machine Learning}
         \hfill #2 }
       \vspace{4mm}
       \hbox to 5.78in { {\Large \hfill #5  \hfill} }
       \vspace{2mm}
       \hbox to 5.78in { {\it #3 \hfill #4} }
      }
   }
   \end{center}
   \vspace*{4mm}
}
\pgfplotsset{compat=newest} 
\pgfplotsset{plot coordinates/math parser=false} 
\newlength\figureheight 
\newlength\figurewidth 
\newcommand{\sr}{\,\mathrm{sr}}
\newcommand{\tsr}{\,\tilde{sr}}
\newcommand{\svrg}{\textsc{SVRGsolve}}
\newcommand{\accsvrg}{\textsc{AccSVRGsolve}}
\author{Alon Gonen\footnote{School of Computer Science, The Hebrew University, Jerusalem, Israel}  \and Shai Shalev-Shwartz\footnote{School of Computer Science, The Hebrew University, Jerusalem, Israel}}
\title{Faster Low-rank Approximation using Adaptive Gap-based Preconditioning}
\begin{document}

\maketitle

\begin{abstract} 
We propose a method for rank $k$ approximation to a given input matrix $X \in \reals^{d \times n}$ which runs in time 
\[
\tilde{O} \left(d ~\cdot~ \min\left\{n +\tsr(X)\,G^{-2}_{k,p+1} ~,~ n^{3/4}\,\tsr(X)^{1/4}\,G^{-1/2}_{k,p+1}\right\} ~\cdot~\poly(p)\right) ~,
\] 
where $p>k$, $\tsr(X)$ is related to stable rank of $X$, and $G_{k,p+1} = \frac{\sigma_k-\sigma_p}{\sigma_k}$ is the multiplicative gap between the $k$-th and the $(p+1)$-th singular values of $X$. In particular, this yields a linear time algorithm if the gap is at least $1/\sqrt{n}$ and $k,p,\tsr(X)$ are constants.
\end{abstract}

\section{Introduction} \label{sec:intro} 
We consider the fundamental \emph{low-rank approximation} problem: given $X  \in \reals^{d \times n}$, a target dimension $k<d$ and an accuracy parameter $\epsilon$, we would like to find a rank-$k$ orthogonal projection matrix $\Pi$ which approximately minimizes the error $\|X-\Pi X\|_\xi$, where $\|\cdot\|_\xi$ is either the Frobenius norm $\|\cdot\|_F$ or the spectral norm $\|\cdot\|$. This problem has many important applications in \emph{machine learning}. The most prominent example is \emph{Principal Component Analysis} (PCA); when the columns of $X$, denoted $x_1,\ldots, x_n \in \reals^d$, are data points, the Frobenius norm error coincides with the objective of PCA.

Denote the SVD of $X$ by $X= \sum \sigma_i u_i v_i^\top$, where the singular values are sorted in a descending order. It is well known that the projection matrix minimizing $\|X - \Pi X\|_\xi$, for both Frobenius and spectral norm, is $\Pi^\star = \sum_{i=1}^k u_i u_i^\top$, and we have $\Pi^\star X = X_k:= \sum_{i=1}^k \sigma_i u_i v_i^\top$. Therefore, the best rank $k$ approximation can be found by SVD computation. However, computing the SVD is often prohibitively expensive, and we therefore seek for efficient approximation algorithms.

Naturally, this problem has received much attention in the literature. For the case $k=1$, a well known approach is Power iteration (\cite{trefethen1997numerical}), which starts with some random vector in $\reals^d$, and keep multiplying it by $X^\top X$, while normalizing the vector after each such multiplication. A generalization to $k > 1$ is obtained by starting with a random matrix of size $d \times p$, keep multiplying it by  $X^\top X$, while ortho-normalizing the matrix after each such multiplication. The analysis of the Power iteration depends on the multiplicative spectral gap --- for every $j > i$ we denote 
\begin{equation} \label{eq:condition}
  G_{i,j} = \frac{\sigma_i-\sigma_{j}}{\sigma_i} ~.
\end{equation}
In particular, \cite{musco2015stronger} have shown that the Power iteration finds a (multiplicative) $(1+\epsilon)$-approximate minimizer after at\footnote{We use the notation $\tilde{O}$ to hide polylogarithmic dependencies. In particular, here it hides a $\log(1/\epsilon)$ factor.} most $\tO(G^{-1}_{k,k+1})$ iterations. We view the quantity $G^{-1}_{k,k+1}$ as the \emph{condition number} of the problem. The runtime at each iteration is $O(dn)$, hence the total runtime is $\tO(dn \,G^{-1}_{k,k+1})$.\footnote{We note that if $X$ is sparse, the runtime at each iteration is controlled by the number of nonzero elements in $X$, denoted $\nnz(X)$. Hence, the term $dn$ can be replaced by $\nnz(X)$. In the formal statement of our results we use the more tight bound of $\nnz(X)$, but for the simplicity of the representation, throughout the introduction we stick to the dense case.} If the multiplicative gap value is a constant then the runtime of Power iteration becomes linear in the size of $X$. However, when the gap is small, the runtime becomes too large. Several recent algorithmic ideas and new analyzes of existing algorithms have lead to significant improvements over Power iteration.
\begin{enumerate}
\item \textbf{Oversampling: } The Power iteration starts with an initial random matrix $S \sim \cN(0,1)^{d \times k}$ and keep multiplying it by $X X^\top$. Even though we are interested in the top $k$ eigenvectors, we can apply the Power iteration with a matrix $S \sim \cN(0,1)^{d \times p}$, where we refer to $p > k$ as the oversampling parameter. After approximately finding the top $p$ eigenvectors of $X X^\top$, we project the columns of $X$ onto the subspace spanned by these $p$ eigenvectors, and find the top $k$ eigenvectors in the reduced subspace. The runtime of the latter stage is negligible (as long as $p$ is of the same order as $k$). Several recent papers analyzed the effect of oversampling on the convergence rate (e.g., \cite{halko2011finding,musco2015stronger,wang2015improved}), showing that now the required number of iterations is order of $G^{-1}_{k,p+1} $ rather than order of $G^{-1}_{k,k+1}$. A common empirical observation is that while the gap between the consecutive singular values $\sigma_k$ and $\sigma_{k+1}$ might be tiny, we often can find $p$ of the same order of magnitude as $k$ such that the gap between $\sigma_k$ and $\sigma_{p+1}$ is substantially larger.
\item \textbf{Matrix-free shift-and-invert preconditioning: } The \emph{shift-and-invert} method is a well established \emph{preconditioning} technique in numerical linear algebra (\cite{trefethen1997numerical}). Roughly speaking, for some appropriately chosen \emph{shift parameter} $\lambda$, this preconditioning process reduces the task of approximating several eigenvectors of $A=XX^\top$ to the task of approximating several eigenvectors of $D=(\lambda I-A)^{-1}$. For example, note that if $0<\lambda-\lambda_1$, then the top eigenvector of $D$ coincides with $u_1$, the top eigenvector of $A$. Furthermore, it is seen that if $\lambda-\lambda_1=a (\lambda_1-\lambda_2)$ for some positive constant $a$, then the multiplicative gap between the first and the second eigenvalue of $D$ becomes a constant. Consequently, for such a choice, by applying the Power iteration to $D$ rather than to $A$, we converge to $u_1$ rapidly. The catch is that inverting $(\lambda I-A)$ is as costly as an exact SVD computation. On the other hand, since the Power iteration only requires multiplications with $(\lambda I-A)^{-1}$, it makes sense to avoid the inversion and approximate each such multiplication to an high accuracy. This is exactly the approach taken by \cite{garber2015fast} and \cite{jin2015robust}. In particular, by slightly modifying the Stochastic Reduced Variance Gradient (SVRG) algorithm due to \cite{johnson2013accelerating}, they were able to approximately solve each linear system to an extremely high accuracy in time $\tilde{O} (d(n + \sr(X) G_{{1,2}}^{-2}))$, where 
\begin{equation} \label{eqn:stable_rank_def}
\sr(X) := \|X\|_F^2/\sigma_1^2 
\end{equation}
is the stable rank of $X$. Since the Power iteration applied to $D$ requires only polylogarithmic number of iterations in order to converge to $u_1$, the overall complexity is dominated by the complexity of a single application of SVRG.  

Comparing the obtained runtime to the Power iteration, we observe that this method has a worse dependence on the gap, $G_{1,2}^{-2}$ vs. $G_{1,2}^{-1}$, and an additional dependence on the stable rank, $\sr(X)$. However, the advantage is that $\sr(X) G_{1,2}^{-2}$ is being \emph{added} to $n$ rather than \emph{multiplied} by $n$. As a result, this method is much faster than Power iteration whenever $\sr(X) G_{1,2}^{-1} \ll n$.
\item \textbf{Acceleration:} 
The Lanczos method, which has been recently analyzed in \cite{musco2015stronger}, reduces the number of iterations of Power iteration to order of $G^{-1/2}_{k,k+1}$, and yields a runtime of order $d n G^{-1/2}_{k,k+1}$. There is a close relationship between this improvement to Nesterov's accelerated gradient descent. In fact, for the case of $k=1$, by using an acceleration version of SVRG (\cite{frostig2015regularizing}), the complexity of the ``Matrix-free shift-and-invert preconditioning'' method described previously becomes $\tilde{O} (d \, n^{3/4}\, (\sr(X))^{1/4} \, G^{-1/2}_{1,2} )$. 
\end{enumerate}

The goal of this paper is to develop a method that enjoys all of the above three improvements and that is not restricted to the case $k=1$. 

The first step is to inject oversampling into the ``Matrix-free shift-and-invert preconditioning'' method, so that its runtime will depend on $G_{1,p+1}$ rather than on $G_{1,2}$. As will be apparent soon, this is obtained by using Power iteration (see \secref{sec:twoStage}) instead of the vanilla Power method, while using SVRG to approximately compute $p$ matrix-vector products rather than $1$ on each round. 
While this step is technically easy, it is important from practical perspective, as in many cases, the gap between the first and second eigenvalues is small, but there is a constant $p$ such that the gap between the first and the $(p+1)$-th eigenvalues is large.

The second step is to generalize the results for $k > 1$. A naive approach is to rely on a \emph{deflation} technique, namely, to approximate one eigenvector at a time using the algorithm for the case $k=1$ while removing components that we have already computed.  As mentioned by \cite{shamir2015fast}, the problem with this approach is that both the convergence rate and the success of the deflation procedure heavily depend on the existence of large enough gaps between \emph{all} of the top leading eigenvalues, which will usually lead to a long runtime. 
Instead, we suggest an adaptive algorithm which estimates the gaps between the leading singular values and based in this information, it divides the low-rank approximation task into several smaller subproblems. Depending on the condition number of each subproblem, our algorithm chooses between direct application of the Power iteration and an extension of the ``Matrix-free shift-and-invert preconditioning''. 

To summarize, we strengthen the results of \cite{garber2015fast} and \cite{jin2015robust} in two important ways: a) while their results are limited to the task of approximating the top leading eigenvector, our results apply to any target dimension. b) we allow the incorporation of oversampling techniques that lead to further improvements in terms of gap dependence. This makes the method more practical and suitable to large-scale eigenvalue problems.

\subsection{Our Results} \label{sec:results}
The next theorem formally states our contribution. We denote by the number of non-zero elements of $X$ by $\nnz(X)$. The definition of $G_{i,j}$ is given in \eqref{eq:condition} and of $\sr(X)$ is given in \eqref{eqn:stable_rank_def}.
\begin{theorem} \label{thm:mainResult}
Let $X \in \reals^{n \times d}$ and let $1 \le k < p < d$ be such that $\sigma_k-\sigma_{p+1}>0$. Denote by $\tsr(X)=\max_{i \in \{1,\ldots,k-1\}}\sr(X-X_i)$. For any $\delta, \epsilon \in (0,1)$, with high probability, our algorithm finds an orthogonal rank-$k$ projection matrix $\hPi$ which satisfies 
\[
\|X-\hPi X\|_\xi \le (1+\epsilon) \|X-X_k\|_\xi~,
\]
(where $\|\cdot\|_\xi$ is either the Frobenius or the spectral norm) in time $\tilde{O} ((\nnz(X) + d \,\tsr(X) G_{{k,p+1}}^2)\, \poly(p))$ or $\tilde{O} \left(\left(\nnz(X)^{3/4} (d \tsr(X))^{1/4} \sqrt{G_{{k,p+1}} } \right)\, \poly(p) \right)$ if acceleration is used.
\end{theorem}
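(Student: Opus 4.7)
The plan is to combine three building blocks. First, block power iteration with oversampling approximates the leading rank-$p$ subspace of a positive semidefinite matrix $A$ in $\tilde{O}(G_{k,p+1}^{-1}(A))$ outer iterations, as in \cite{musco2015stronger}. Second, shift-and-invert preconditioning replaces $A = XX^\top$ by $D = (\lambda I - A)^{-1}$ for a shift $\lambda$ slightly above $\sigma_k^2$: choosing $\lambda - \sigma_k^2 = \Theta(\sigma_k^2 - \sigma_{p+1}^2)$ makes $G_{k,p+1}(D) = \Omega(1)$, so only $\tilde{O}(1)$ outer iterations on $D$ are needed. Third, every multiplication by $D$ is realized approximately by applying SVRG (or accelerated SVRG, as in \cite{frostig2015regularizing}) to the strongly convex quadratic $v \mapsto \tfrac{1}{2} v^\top (\lambda I - A) v - v^\top u$, at the per-call costs established in \cite{garber2015fast,jin2015robust}.

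The first key technical step is to lift the shift-and-invert guarantee to the \emph{block} setting. On each outer iteration we must carry out $p$ approximate solves against $(\lambda I - A)$ instead of one; using a warm-start/early-stopping argument I would show that driving each solve to inverse polynomial accuracy suffices for block power iteration on $D$ to converge in $\tilde{O}(1)$ rounds and to return a subspace whose principal angles with $\spn\{u_1,\dots,u_p\}$ are at most $\epsilon$. The SVRG condition number is $\sigma_1^2/(\lambda - \sigma_k^2)$ times a factor of $\sr(\cdot)$, which is the source of both the $G_{k,p+1}^{-2}$ (or $G_{k,p+1}^{-1/2}$ with acceleration) dependence and the stable-rank factor; the substitution of $\tsr(X)$ for $\sr(X)$ is forced because, for later clusters, the effective matrix is $X - X_i$ and its stable rank is $\sr(X - X_i)$.

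To handle $k > 1$ without paying for every intermediate gap $G_{i,i+1}$, deflation is avoided. Instead, the algorithm is adaptive: it first runs a short block power iteration on $A$ to obtain crude estimates of $\sigma_1,\dots,\sigma_p$, uses these to partition $\{1,\dots,k\}$ into consecutive clusters $I_1,\dots,I_m$ separated by gaps of magnitude comparable to $G_{k,p+1}$, and invokes the block shift-and-invert routine once per cluster with the shift placed just above that cluster. For a cluster whose internal gap happens to be large, a direct block power iteration on $A$ is used instead, since its cost $\tilde{O}(\nnz(X) \cdot G^{-1})$ is already within the claimed budget. The orthogonal direct sum of the resulting subspaces, followed by a small SVD on a $d \times O(p)$ matrix (absorbed into $\poly(p)$), produces $\hPi$; the standard reduction in \cite{musco2015stronger} then converts a principal-angle bound into the $(1+\epsilon)$ guarantees in both Frobenius and spectral norms.

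The main obstacle is certifying the adaptive partition and propagating error across clusters. I expect the bulk of the technical work to be showing that (a) the gaps detected from the crude estimates are genuinely large enough for shift-and-invert with the claimed per-cluster runtime; (b) the perturbation introduced by the approximate subspace of cluster $I_j$ does not close the gap visible to cluster $I_{j+1}$, which I would establish via a Davis--Kahan sin-$\Theta$ argument on the residual $X - \hPi_j X$; and (c) every subproblem's effective stable rank is bounded by $\tsr(X) = \max_{i < k} \sr(X - X_i)$, which is exactly the quantity the definition was tailored to accommodate. Summing the per-cluster costs and multiplying by the $\poly(p)$ overhead from at most $O(k)$ clusters and the final small-SVD truncation yields the two runtime bounds in the theorem.
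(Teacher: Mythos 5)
Your overall architecture matches the paper's (adaptive gap-based partition into clusters, per-cluster choice between plain power iteration and shift-and-invert with SVRG/accelerated-SVRG solves, oversampling only for the last cluster, a final small SVD in the assembled subspace, and conversion of principal-angle bounds into $(1+\epsilon)$ norm bounds), but there is a genuine gap at the heart of your plan: the claim that deflation is avoided. For any cluster other than the top one, placing the shift ``just above that cluster'' puts $\lambda$ strictly below $\lambda_1(A)$, so $\lambda I - A$ is indefinite, the quadratic $v \mapsto \tfrac{1}{2} v^\top(\lambda I - A)v - v^\top u$ is not convex, and the per-solve costs you invoke from \cite{garber2015fast,jin2015robust} (\thmref{thm:svrgSolve} requires $0<\lambda-\lambda_1$) simply do not apply; likewise, ``direct block power iteration on $A$'' cannot target an interior cluster without first removing the higher components. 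The paper resolves exactly this by deflating the previously computed clusters and running both the plain and the shifted-inverse iterations on $A_{-(s-1)}$ (equivalently $D_{-(s-1)}$ in \eqref{eq:deflatedShift}), so that the shift sits above the top eigenvalue of the \emph{effective} matrix; this sequential deflation is also the only reason $\tsr(X)=\max_i \sr(X-X_i)$ enters the bound. Your own item (c) tacitly concedes this (``the effective matrix is $X-X_i$''), contradicting the no-deflation claim, and your Davis--Kahan propagation across clusters is precisely the noisy-deflation issue the paper handles within the noisy power iteration framework of \secref{sec:precision}.

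A second concrete gap is the partitioning mechanism. Crude estimates of $\sigma_1,\dots,\sigma_p$ from a short power iteration on $A$ cannot certify additive gaps of order $\Delta \approx \lambda_k-\lambda_{p+1}$: to resolve such gaps directly you would need relative accuracy of order $G_{k,p+1}$, which by \thmref{thm:musco} costs $\tO(G_{k,p+1}^{-1}\,\nnz(X)\,\poly(p))$ --- a multiplicative, not additive, dependence on the gap, defeating the purpose of the theorem. The paper's workaround is a two-tier detection scheme: cheap constant-accuracy estimates suffice to find multiplicative gaps of order $\poly(1/p)$ (\lemref{lem:largeDetect}, \corref{cor:largeDetect}); in their absence, additive gaps are estimated through constant-relative-accuracy eigenvalue estimates of the shifted-inverse matrix itself, with the shift iteratively tuned so that $\lambda-\lambda_s = \Theta(\Delta)$ (\algref{alg:shiftTune}, \lemref{lem:shiftTune}), which is what makes $\Delta$-scale gaps detectable at preconditioned cost (\lemref{lem:deltaDetect}, \lemref{lem:deltaSound}, \corref{cor:mainCond}). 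You flag this certification as the main obstacle but propose no mechanism that works within the claimed budget, so as written the proposal does not yield the stated runtime.
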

Few comments are in order. When $k=p=1$, our bounds are identical to the bounds of \cite{jin2015robust}. The computational price of extending the result to any $k$ and $p$ is polynomial in $p$, as one could expect. As we mentioned above, by using oversampling we may substantially improve the gap dependency. Finally, while in general $\tilde{\sr}(X)$ and $\sr(X)$ are not comparable, they have the same roles in the cases $k=1$ and $k>1$, respectively. Namely, both are upper bounded by the rank of $X$. Furthermore, as we are interested in reducing the dimensionality to $k$, we implicitly presume that $\sr(X)$ and $\tsr(X)$ are much smaller than the rank in the cases $k=1$ and $k>1$, respectively.

\subsection{Related work} \label{sec:related}
The low-rank approximation problem has also been studied in a gap-independent setting. As was shown in recent papers (\cite{musco2015stronger}, \cite{wang2015improved}), although one can not hope to recover the leading eigenvectors in this setting, the Power iteration and the Lanczos methods yield the same norm bounds in time $\tO(\epsilon^{-1} \nnz(X)p)$ and $\tO(\epsilon^{-1/2} \nnz(X)p)$, respectively. 

Recently there has been an emerging interest in randomized methods for low-rank approximation (\cite{woodruff2014sketching, rokhlin2009randomized, shamir2015convergence}) both in offline, stochastic and streaming settings (\cite{jain2016streaming, jin2015robust}). Furthermore, some of these methods share the important advantage of decoupling the dependence on $\nnz(X)$ from the other dependencies. In the gap independent setting, the sketch-and-solve approach (\cite{sarlos2006improved, woodruff2014sketching, clarkson2013low}) yields the fastest methods which run in time $\nnz(X)+\poly(\sr(X),\epsilon)$. Unfortunately, no linearly convergent algorithms can be obtained using this approach in the gap-dependent setting. 

Another approach is to use randomization in order to perform cheaper updates relative to Power iteration. The simplest algorithm which uses one random column of $X$ at a time is called Oja's algorithm (\cite{oja1985stochastic}). The basic idea is that for a random column $x_j$, the rank-$1$ matrix $x_j x_j^\top$ forms an unbiased estimate of $A=XX^\top$. Due to the noise arising from the estimation process, Oja's method is not a linearly convergent algorithm. Recently, \cite{shamir2015convergence} used variance reduction techniques to remedy this situation. In some sense, the method proposed by \cite{shamir2015convergence} is to Oja's method as Stochastic Variance Reduced Gradient (SVRG) is to Stochastic Gradient Descent (SGD). It should be remarked that the low-rank minimization problem is substantially non-convex. Nevertheless, \cite{shamir2015convergence} was able to obtain a linearly convergent algorithm while decoupling the size of the data from the condition number. While this method is suitable to any $k>1$, as explained in detail in \cite{jin2015robust}, the bounds of \cite{shamir2015convergence} have suboptimal dependence on the natural parameters. Furthermore, no accelerated bounds are known for this algorithm. Last, while the reduction approach taken here and in \cite{garber2015fast, jin2015robust} allows us to easily incorporate any further improvements to Power iteration (e.g., the oversampling idea), it is not clear how to integrate these results into the analysis of \cite{shamir2015convergence}.

\paragraph{Organization}
In \secref{sec:pre} we introduce the notation used throughout the paper and discuss some preliminaries. \secref{sec:main} is devoted to the description of our algorithm. Missing proofs can be found in the Appendix.

\section{Preliminaries} \label{sec:pre}
\subsection{Notation} \label{sec:notation}
We denote by $t_C$ the time it takes to multiply a matrix $C$ by a vector. For $p \le d$, we denote by $\cO^{d \times p}$ the set of $d \times p$ matrices with orthonormal columns. Given $X \in \reals^{n \times d}$ whose SVD is $X=U\Sigma V^\top = \sum_{i=1}^{\min\{n,d\}} \sigma_i u_i v_i^\top$, the best rank-$k$ approximation to $X$ (w.r.t. both $\|\cdot\|_2$ and $\|\cdot\|_F$) is $X_k = U_k \Sigma_k V_k = \sum_{i=1}^k \sigma_i u_i v_i^\top$. We denote the reminder $X-X_k$ by $X_{-k}$. Let $k < p < d$ and suppose that $Y \in \reals^{d \times p}$ has a full column rank. We often need to compute the best rank-$k$ approximation to $X$ in the column space of $Y$. For the Frobenius norm error, a closed-form solution is given by $Q(Q^\top M)_k$, where $Q \in \reals^{d \times p}$ is an orthonormal matrix whose span coincides with the range of $Y$ (\cite{woodruff2014sketching}[Lemma 4.1]).

\subsection{Power iteration: A two-stage framework for low-rank approximation}  \label{sec:twoStage}
In this section we describe a basic two-stage framework for $k$-rank approximation (\cite{halko2011finding, hardt2014noisy, balcan2016improved}) which we simply call Power iteration. Recall that we aim at finding an approximated low-rank approximation to the matrix $X = U \Sigma V^\top \in \reals^{d \times p}$. The matrix $X$ can be thought of as the data matrix presented at the beginning, or alternatively, a deflated data matrix resulted from a removal of the top components (which have already been approximately computed). 
\subsubsection{First stage}
The input in the first stage consists of a semidefinite matrix $C \in \reals^{d \times d}$ whose eigenvectors are equal to the left singular vectors of $X$, and an oversampling parameter $p$. While the natural choice is $C = XX^\top$, we sometimes prefer to work with a different matrix mainly due to conditioning issues. The method iteratively multiplies a randomly drawn matrix $S \in \cN(0,1)^{d \times p}$ from left by $C$ and ortho-normalizes the result (see \algref{alg:subIter}).\footnote{Usually, $C$ has some factored form (e.g., $C=XX^\top$). In such a case we do not form the matrix $C$ when performing multiplications with $C$.} The runtime of each iteration is $t_C \cdot p +dp^2$, where the latter term is the cost of the QR factorization.
\begin{algorithm}
\caption{First stage of power iteration: subspace iteration}
\label{alg:subIter}
\begin{algorithmic}
\STATE \textbf{Input: } A positive semidefinite matrix $C \in \reals^{d \times d},p,L$~~ $(1 < p < d)$
\STATE Draw $S^{(0)} \in \cN(0,1)^{d \times p}$
\FOR {$\ell=1$ \TO $L$}
\STATE $Y^{(\ell)} = C S^{(\ell-1)}$
\STATE $Y^{(\ell)} = S^{(\ell)}  R^{(\ell)}$ (QR decomposition)
\ENDFOR
\STATE \textbf{Output: $S^{(L)}$}
\end{algorithmic}
\end{algorithm}
An elegant notion that captures the progress during this stage is the principal angles between subspaces (we provide the definition and some basic properties in the Appendix).
\begin{theorem}(\textbf{\cite{wang2015improved}[Theorem 6.1]}) \label{thm:zhang} 
Let $C = U \barLam U^\top \succeq 0$, $k$ be a target dimension and $p>k$ be the oversampling parameter. Suppose that we run \algref{alg:subIter} with the input $(C,p)$. Then with high probability, after $L= O(G_{k,p+1}^{-1}\log(d/\epsilon)$ iterations, we have
\[
\tan(\theta_k(U_k,S^{(L)}))\le \epsilon~.
\]
\end{theorem}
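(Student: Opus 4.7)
The plan is to reduce the convergence analysis to tracking how the random initial subspace $S^{(0)}$ gets aligned with $U_k$ under repeated multiplication by $C$, and then exploit the oversampling $p > k$ to obtain the gap $\bar\lambda_{p+1}/\bar\lambda_k$ rather than the worse $\bar\lambda_{k+1}/\bar\lambda_k$.

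First I would observe that, because the QR steps only right-multiply by invertible $p\times p$ matrices, $\spn(S^{(L)}) = \spn(C^L S^{(0)})$, so it suffices to analyze the column span of $C^L S^{(0)}$. Working in the eigenbasis $U$, set $G = U^\top S^{(0)} \in \reals^{d\times p}$, which is itself a standard Gaussian since $U$ is orthogonal. Partition
\[
\bar\Lambda \;=\; \begin{pmatrix}\bar\Lambda_A & \\ & \bar\Lambda_B\end{pmatrix},\qquad G\;=\;\begin{pmatrix}G_A\\ G_B\end{pmatrix},
\]
where $\bar\Lambda_A = \diag(\bar\lambda_1,\dots,\bar\lambda_p)$ and $\bar\Lambda_B = \diag(\bar\lambda_{p+1},\dots,\bar\lambda_d)$, so $G_A \in \reals^{p\times p}$ is a square Gaussian (hence invertible w.h.p.) and $G_B \in \reals^{(d-p)\times p}$. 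Then
\[
C^L S^{(0)} G_A^{-1} \;=\; U\begin{pmatrix}\bar\Lambda_A^L\\ \bar\Lambda_B^L G_B G_A^{-1}\end{pmatrix},
\]
and this matrix's column span is contained in $\spn(S^{(L)})$.

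Next, I would select the first $k$ columns of $C^L S^{(0)} G_A^{-1}$. On $U_k$ these act as $\diag(\bar\lambda_1^L,\dots,\bar\lambda_k^L)$, which is invertible with smallest entry $\bar\lambda_k^L$; on $U_{-p}$ the components are bounded in norm by $\bar\lambda_{p+1}^L\,\|G_B G_A^{-1}\|$; and there are no components on the intermediate block $U_{k+1:p}$. Using the standard characterization of $\tan\theta_k$ as the ratio of the ``bad'' component (along $U_k^\perp$) to the inverse of the ``good'' component (along $U_k$), this yields
\[
\tan\!\big(\theta_k(U_k, S^{(L)})\big) \;\le\; \|G_B G_A^{-1}\| \cdot \Big(\tfrac{\bar\lambda_{p+1}}{\bar\lambda_k}\Big)^{L}.
\]
The crucial point here is that oversampling lets us discard the middle eigendirections $k+1,\dots,p$ entirely, because the matrix $\bar\Lambda_A^L$ is square and invertible, so the contribution to $U_k^\perp$ comes only from indices $> p$.

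I would then convert the eigenvalue gap to the stated singular-value gap: taking $C=XX^\top$ gives $\bar\lambda_i=\sigma_i^2$, and
\[
\tfrac{\bar\lambda_{p+1}}{\bar\lambda_k} \;=\; \Big(\tfrac{\sigma_{p+1}}{\sigma_k}\Big)^{\!2} \;\le\; 1 - G_{k,p+1},
\]
so $(\bar\lambda_{p+1}/\bar\lambda_k)^L \le e^{-L\,G_{k,p+1}}$. Finally, one needs a high-probability polynomial-in-$d$ bound on $\|G_B G_A^{-1}\|$: this is the only step requiring random-matrix input, and it follows from standard concentration, namely an upper tail for $\|G_B\|$ (a Gaussian block) and a lower tail for $\sigma_{\min}(G_A)$ (Edelman-type bound for a square Gaussian, giving $\sigma_{\min}(G_A) \gtrsim \delta/\sqrt{p}$ with probability $1-\delta$). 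Plugging in $\|G_B G_A^{-1}\| \le \poly(d,p,1/\delta)$ and choosing $L = O(G_{k,p+1}^{-1}\log(d/\epsilon))$ absorbs the polynomial prefactor into the log and drives $\tan(\theta_k) \le \epsilon$.

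The main obstacle will be the principal-angle bookkeeping in step two: one has to verify cleanly that the quantity
$\tan\theta_k$ between $U_k$ and the span of a $p$-column matrix $M$ is upper bounded by $\|P_{U_k^\perp} M\,(P_{U_k} M)^\dagger\|$, and that restricting to any $k$-column submatrix (in particular, the first $k$ columns of $C^LS^{(0)}G_A^{-1}$) only increases this bound. Once that formulation is nailed down, everything else is linear algebra plus a Gaussian tail bound; the rest of the work (deflation, shift-and-invert embedding, acceleration) is handled elsewhere in the paper and is independent of this theorem.
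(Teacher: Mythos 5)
The paper never proves this statement at all: it is imported verbatim from \cite{wang2015improved} (their Theorem 6.1), so there is no internal proof to compare against, and your job here is really to supply the external argument. Your sketch is the standard structured analysis of randomized subspace iteration (Halko et al.\ / Hardt--Price / Wang et al.\ style) and it is sound: span containment through the QR steps, the block decomposition $C^L S^{(0)} G_A^{-1}$ in the eigenbasis, discarding the middle block $k+1,\dots,p$ because $\bar\Lambda_A^L$ is square and diagonal, and the tail bound on $\|G_B G_A^{-1}\|$. The principal-angle step you flag as the main obstacle is exactly what \lemref{lem:nonRecursiveAngle} in the appendix is there for: take $\Pi$ to be the coordinate projection onto the first $k$ columns of $C^L S^{(0)} G_A^{-1}$, and note that the ratio in the paper's statement of that lemma appears with numerator and denominator interchanged --- the tangent is the ``bad over good'' ratio $\|U_{-k}^\top S w\|/\|U_k^\top S w\|$ that you use. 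Three small points to nail down: (i) you only need $\range(C^L S^{(0)}) \subseteq \range(S^{(L)})$, which holds by induction even if some $R^{(\ell)}$ is singular, together with monotonicity of principal angles under enlarging the second subspace, so your containment phrasing (rather than span equality) is the right one to keep; (ii) the Edelman bound $\sigma_{\min}(G_A)\gtrsim \delta/\sqrt{p}$ gives failure probability $\delta$, i.e.\ inverse-polynomial rather than exponentially small, which is what ``with high probability'' should be read as here, and the $\poly(d,p,1/\delta)$ prefactor is indeed absorbed by the $\log(d/\epsilon)$ in $L$; (iii) the theorem is stated for a general PSD matrix $C=U\bar\Lambda U^\top$ (it is later applied to deflated and shifted-inverse matrices, not only to $XX^\top$), so the clean way to state the rate is via the eigenvalue gap $(\bar\lambda_k-\bar\lambda_{p+1})/\bar\lambda_k$ directly; your squaring step $\bar\lambda_i=\sigma_i^2$ is just the special case $C=XX^\top$ and should be presented as such.
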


\subsubsection{Second stage}
The first stage yields a matrix $S^{(L)} \in \cO^{d \times p}$ whose range is approximately aligned with the leading eigenvectors of $C$, as reflected by \thmref{thm:zhang}. In the second stage we use $S^{(L)}$ to compute the Frobenius best rank-$k$ approximation to $X = U \Sigma V^\top$ in the column space of $S^{(L)}$ (see \secref{sec:notation}). The complexity is $O(pdn)$.
\begin{algorithm}
\caption{Second stage of Power iteration: low-rank approximation restricted to a subspace}
\label{alg:angleToNormAlg}
\begin{algorithmic}
\STATE \textbf{Input: } A positive semidefinite matrix $C \in \reals^{d \times d},S \in \cO^{d \times p},k~~(k<p<d)$
\STATE Compute the eigenvalue decomposition $S^\top C S = \hat{U} \hat{\Sigma} \hat{U}^\top \in \reals^{p \times n}$
\STATE \textbf{Output: } Return the matrices $S, \hat{U}_k$ which form the projection matrix $S \hat{U}_k \hat{U}_k^\top S^\top$
\end{algorithmic}
\end{algorithm}
There are standard techniques for translating principal angle bounds into matrix norm bounds (e.g. \cite{wang2015improved}[Section 6.2]). We will employ such a technique in the proof of our main theorem.

\subsection{SVRG/SDCA based solvers for linear systems} \label{sec:SVRG}
As mentioned above, \cite{jin2015robust} suggested a variant of SVRG for minimizing convex quadratic sum of non-convex functions. They also derived an accelerated variant.  We call the corresponding methods \svrg~and \accsvrg. We next state the complexity bounds for these methods. 

\begin{theorem} \label{thm:svrgSolve} \textbf{(\cite{jin2015robust}[Theorems 12,15])}
Let $X \in \reals^{d \times n}$ and $\lambda$ be a shift parameter such that $0<\lambda-\lambda_1$, where $\lambda_1=\lambda_1(XX^\top)$. Denote by $D = (\lambda I-XX^\top)$. For any vector $b$ and $\epsilon,\delta \in (0,1)$, with probability at least $1-\delta$, \svrg$(X,\lambda,b)$ returns a vector $x$ with $\|x-(\lambda I - XX^\top)^{-1}b\|_D \le \epsilon$ in time $\tilde{O} \left (\nnz(X)+d \, \sr(X) \frac{\lambda_1^2}{(\lambda-\lambda_1)^2} \right )$. 
The \accsvrg~method satisfies the same accuracy conditions in time $\tilde{O} \left (\nnz(X)^{3/4} (d \,\sr(X))^{1/4}  \frac{\lambda_1^{1/2}}{(\lambda-\lambda_1)^{1/2}} \right)$.
\end{theorem}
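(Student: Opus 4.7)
The plan is to express solving $Dx = b$ as minimizing the finite-sum convex quadratic
\[
f(x) = \tfrac{1}{2}\langle x, Dx\rangle - \langle b, x\rangle = \sum_{i=1}^n f_i(x), \qquad f_i(x) = \tfrac{\lambda}{2n}\|x\|^2 - \tfrac{1}{2}(x^\top x_i)^2 - \tfrac{1}{n}\langle b, x\rangle,
\]
where $x_i$ is the $i$th column of $X$. The sum $f$ is $\mu$-strongly convex with $\mu = \lambda - \lambda_1$ (since $\lambda I \succ XX^\top$), and its unique minimizer is $x^\star = D^{-1}b$. The identity $f(x) - f(x^\star) = \tfrac{1}{2}\|x - x^\star\|_D^2$ lets one translate the $\|\cdot\|_D$-error requirement into a function-value guarantee. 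The subtlety is that each component $f_i$ is non-convex: its Hessian $\tfrac{\lambda}{n}I - x_i x_i^\top$ has a negative eigenvalue whenever $\|x_i\|^2 > \lambda/n$, so the textbook convex-component SVRG analysis does not apply, and this is exactly what forces the quadratic (rather than linear) dependence on $\lambda_1/\mu$.

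For \svrg, my plan is to run standard SVRG with the unbiased estimator $\tilde g = \tfrac{1}{np_i}(\nabla f_i(x) - \nabla f_i(\tilde x)) + \nabla f(\tilde x)$, sampling with $p_i \propto \|x_i\|^2$ so that only the average smoothness $\bar L = \|X\|_F^2/n = \sr(X)\lambda_1/n$ enters. The main technical step is a non-convex-component variance bound of the form
\[
\bE\|\tilde g - \nabla f(x)\|^2 \;\le\; C \lambda_1\bigl(f(x) - f(x^\star)\bigr) + C \lambda_1 \bigl(f(\tilde x) - f(x^\star)\bigr),
\]
derived by decomposing each $\nabla f_i(x) - \nabla f_i(\tilde x)$ into a piece dominated by the global convexity of $f$ and a residual controlled by the per-component spectral bound $\|x_i x_i^\top\|_{\mathrm{sp}} \le \lambda_1$. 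Plugging this into the usual SVRG Lyapunov argument and choosing the inner-loop length $m = \Theta(\sr(X) \lambda_1^2/\mu^2)$ yields geometric contraction of $\bE[f(x_t) - f(x^\star)]$ by a constant factor per epoch. Each epoch costs $O(\nnz(X))$ for the full-gradient snapshot $\nabla f(\tilde x) = XX^\top \tilde x - b$ plus $O(md)$ for the inner stochastic steps, so $\tilde{O}(\log(1/\epsilon))$ epochs give the stated runtime $\tilde{O}(\nnz(X) + d\, \sr(X) \lambda_1^2/\mu^2)$. The high-probability conclusion follows by running $O(\log(1/\delta))$ independent copies and returning the iterate with the smallest empirical $f$-value, which is cheap to evaluate.

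For \accsvrg~I would wrap the above solver in a Catalyst-style outer loop (as in \cite{frostig2015regularizing}): each outer step solves a regularized subproblem $\min_x f(x) + \tfrac{\kappa}{2}\|x - y_t\|^2$ with \svrg, where $y_t$ is a Nesterov-momentum point. The subproblem has strong convexity $\mu + \kappa$, so its inner cost is $\tilde{O}(\nnz(X) + d\, \sr(X) \lambda_1^2/(\mu+\kappa)^2)$, while the number of outer iterations needed is $\tilde{O}(\sqrt{(\mu+\kappa)/\mu})$. Balancing the two terms by choosing $\kappa$ so that $\nnz(X) \sim d\,\sr(X)\lambda_1^2/(\mu+\kappa)^2$ gives exactly the claimed $\tilde{O}\bigl(\nnz(X)^{3/4}(d\, \sr(X))^{1/4}\sqrt{\lambda_1/(\lambda-\lambda_1)}\bigr)$ runtime. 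The chief obstacle throughout is the non-convex-component variance bound in the inner SVRG step; once it is in place, the rest is routine finite-sum optimization machinery and a standard Catalyst calculation.
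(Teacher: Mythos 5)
You should first note that the paper does not prove Theorem~\ref{thm:svrgSolve} at all: it is imported verbatim from \cite{jin2015robust} (Theorems 12 and 15) and used as a black box, so the relevant comparison is with the proof in that reference. Your outline follows essentially the same route as that source (and as \cite{garber2015fast}): express $D^{-1}b$ as the minimizer of a convex quadratic finite sum whose individual components $\tfrac{\lambda}{2n}\|x\|^2-\tfrac12(x^\top x_i)^2-\tfrac1n\langle b,x\rangle$ are non-convex, run an SVRG variant with sampling proportional to $\|x_i\|^2$, accept the squared dependence on $\lambda_1/(\lambda-\lambda_1)$ caused by the non-convex components, and accelerate via an APPA/Catalyst outer loop as in \cite{frostig2015regularizing}; your balancing of $\kappa$ indeed reproduces the stated accelerated runtime. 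The one step that would fail as written is the displayed variance bound: with a universal constant $C$, the inequality $\bE\|\tilde g-\nabla f(x)\|^2\le C\lambda_1[(f(x)-f(x^\star))+(f(\tilde x)-f(x^\star))]$ is false (take $\tilde x=x^\star$ and $x-x^\star$ along the top eigenvector of $XX^\top$: the left side can be of order $\sr(X)\lambda_1^2\|x-x^\star\|^2$, while the right side is only of order $\lambda_1(\lambda-\lambda_1)\|x-x^\star\|^2$), and if it were true, the standard SVRG recursion would give inner-loop length $\Theta(\lambda_1/(\lambda-\lambda_1))$, contradicting the quadratic dependence you yourself argue is unavoidable. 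The bound you actually need, and the one your choice $m=\Theta(\sr(X)\lambda_1^2/(\lambda-\lambda_1)^2)$ implicitly relies on, is
\[
\bE\|\tilde g-\nabla f(x)\|^2 \;\lesssim\; \sr(X)\,\lambda_1^2\,\|x-\tilde x\|^2 \;\lesssim\; \frac{\sr(X)\,\lambda_1^2}{\lambda-\lambda_1}\Bigl[(f(x)-f(x^\star))+(f(\tilde x)-f(x^\star))\Bigr],
\]
which follows from $\sum_i p_i^{-1}\|x_i x_i^\top v\|^2=\|X\|_F^2\,\|X^\top v\|^2\le\|X\|_F^2\,\lambda_1\|v\|^2$ under the sampling $p_i=\|x_i\|^2/\|X\|_F^2$, combined with the $(\lambda-\lambda_1)$-strong convexity of $f$. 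With that correction (and fixing the normalization of the estimator, since you defined $f$ as a sum rather than an average), the rest of your plan goes through and matches the cited result.
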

As we explain in \secref{sec:precision}, throughout this paper we implicitly use these results whenever we consider matrix-vector products with shifted-inverse matrices. 
\subsection{Gap-independent  approximation of eigenvalues} \label{sec:musco}
We use the following gap-independent bounds due to \cite{musco2015stronger} for estimation of eigenvalues using Power iteration. 
\begin{algorithm}
\caption{Gap-independent eigenvalues approximation}
\label{alg:musco}
\begin{algorithmic}
\STATE \textbf{Input: } $C \succeq 0, k, \epsilon~~(k<d)$
\STATE Run \algref{alg:subIter} with the input $(C,k,L=O(\epsilon^{-1} \log(d/\epsilon)))$ to obtain $\tU$
\STATE Run \algref{alg:angleToNormAlg} with the input $(C,\tU,k)$ to obtain $Z=S\hU_k$
\STATE For all $i \in [k]$ compute $\hlam_i = z_i^\top Cz_i$
\STATE \textbf{Output: } $\hlam_1,\ldots,\hlam_k$
\end{algorithmic}
\end{algorithm}
\begin{theorem} \label{thm:musco} \textbf{(\cite{musco2015stronger}[Theorem 1]}
Let $C \in \reals^{d \times d}$ be a positive semidefinite matrix, $k<d$ and $\epsilon \in (0,1)$ be the input to \algref{alg:musco}. Then, with high probability, the output of the algorithm satisfies
\[
(1-\epsilon) \lambda_i(C) \le \hlam_i \le (1-\epsilon)^{-1} \lambda_i(C)
\]
for all $i \in \{1,\ldots,k\}$. The runtime is  $\tilde{O} (\epsilon^{-1} \, t_C)$.
\end{theorem}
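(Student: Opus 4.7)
The plan is to reduce the analysis of $\hat\lambda_i$ to the variational characterization of the eigenvalues of the compressed matrix $\tilde U^\top C\tilde U$. Unwinding Algorithm~\ref{alg:angleToNormAlg} gives $z_i = \tilde U \hat u_{k,i}$, where $\hat u_{k,i}$ is the $i$-th eigenvector of the $k\times k$ matrix $\tilde U^\top C\tilde U$; hence $\hat\lambda_i = z_i^\top C z_i = \lambda_i(\tilde U^\top C\tilde U)$. The upper bound $\hat\lambda_i \le \lambda_i(C)$ is then immediate from Courant-Fischer together with the fact that $\tilde U$ has orthonormal columns, and this is already stronger than the stated $(1-\epsilon)^{-1}\lambda_i(C)$. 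All the work therefore goes into the lower bound $\hat\lambda_i \ge (1-\epsilon)\lambda_i(C)$.

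For the lower bound I would first invoke the standard invariance $\range(\tilde U) = \range(C^L G)$, where $G = S^{(0)}$ is the initial Gaussian matrix; the intermediate QR factorizations in Algorithm~\ref{alg:subIter} only rescale within this range. By Courant-Fischer applied to $\tilde U^\top C\tilde U$ it then suffices to exhibit an $i$-dimensional subspace $V \subseteq \range(C^L G)$ such that every unit vector $v \in V$ satisfies $v^\top C v \ge (1-\epsilon)\lambda_i(C)$.

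To construct such a $V$, I would partition the eigenvalues of $C$ at the threshold $\tau = (1-\epsilon/2)\lambda_i(C)$ and let $j \ge i$ be the number of eigenvalues exceeding $\tau$. With $L = \Theta(\epsilon^{-1}\log(d/\epsilon))$, the polynomial $x \mapsto x^L$ amplifies any eigenvalue $\ge \lambda_i(C)$ over any eigenvalue $\le \tau$ by at least $(1-\epsilon/2)^{-L} = \poly(d/\epsilon)$. Rewriting $C^L G = U\Lambda^L H$ in the eigenbasis of $C$ (with $H = U^\top G$ itself standard Gaussian by rotational invariance), I would partition $H$ into its top-$j$ and bottom-$(d-j)$ row blocks and invoke a standard Gaussian anti-concentration bound to guarantee that the top-$j$ block, restricted to any $i$ of the $k$ columns, has smallest singular value $\Omega(1/\poly(d))$ with high probability over $G$. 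Combining this conditioning with the polynomial amplification lets me pick $i$ linear combinations of the columns of $C^L G$ whose mass on the top-$j$ eigenspace dominates the mass on its complement by a factor of $\poly(d/\epsilon)$; these combinations span the desired $V$.

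The Rayleigh quotient of any unit $v \in V$ then decomposes as $v^\top C v \ge \tau(1-o(1)) - \lambda_1\cdot(\text{small-component contamination})$, and tuning the constant in $L$ so that the contamination term is absorbed into the $\epsilon\lambda_i(C)$ slack yields the claim. The gap-independent step is the main technical obstacle, since without a spectral gap at level $i$ one cannot cleanly separate $\lambda_i(C)$ from nearby eigenvalues, and balancing the $\poly(d)$ cost of the random initialization against the $(1-\epsilon/2)^L$ suppression of bad directions is exactly what forces the $\epsilon^{-1}$ factor in $L$. The runtime follows by summing: Algorithm~\ref{alg:subIter} costs $O((t_C k + dk^2) L) = \tO(\epsilon^{-1} t_C)$ once $\poly(k)$ and logarithmic factors are absorbed, while the eigendecomposition of the $k\times k$ matrix $\tilde U^\top C\tilde U$ and the $k$ final Rayleigh quotients $z_i^\top C z_i$ contribute only lower-order terms.
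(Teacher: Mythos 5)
First, a point of context: the paper does not prove this statement at all --- it is imported verbatim from \cite{musco2015stronger} (Theorem~1), so there is no internal proof to compare against, and your argument has to be judged as a free-standing reconstruction of the gap-free analysis. Your reduction is sound: $\hlam_i=\lambda_i(\tU^\top C\tU)$, the upper bound $\hlam_i\le\lambda_i(C)$ is Cauchy interlacing for compressions, and by Courant--Fischer together with $\range(\tU)=\range(C^L G)$ it indeed suffices to exhibit an $i$-dimensional subspace of $\range(C^L G)$ on which the Rayleigh quotient is uniformly at least $(1-\epsilon)\lambda_i(C)$. The runtime accounting is also fine (modulo the fact that, like the paper, you sweep $\poly(k)$ into the $\tilde O$).

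The genuine gap is in the construction of that subspace. You split the spectrum only at $\tau=(1-\epsilon/2)\lambda_i(C)$, condition the top-$j$ block of $H=U^\top G$, and claim the amplification $(\lambda_i(C)/\tau)^L=\poly(d/\epsilon)$ makes the top-$j$ mass dominate the complement. But that amplification separates eigenvalues $\ge\lambda_i(C)$ from eigenvalues $\le\tau$ only; eigenvalues in the intermediate range $(\tau,\lambda_i(C))$ --- precisely the gap-free situation you flag as the difficulty --- sit inside your top-$j$ block yet are not amplified over $\tau$ at all. So the bound you can actually extract from your two ingredients is of the form $\|\Lambda_1^L H_1 w\|\ge\lambda_j(C)^L\,\sigma_{\min}(H_1)\|w\|$ versus $\|\Lambda_2^L H_2 w\|\le\tau^L\|H_2\|\|w\|$, and since $\lambda_j(C)$ may be arbitrarily close to $\tau$ this gives a ratio of order $1/\poly(d)$, not $\poly(d/\epsilon)$; the step ``mass on the top-$j$ eigenspace dominates its complement'' does not follow. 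Relatedly, well-conditioning of the $j\times i$ submatrix you describe says nothing about whether your combinations load on the directions with eigenvalue $\ge\lambda_i(C)$, which is what the amplification actually requires. The standard repair is a three-block argument: condition the top-$i$ rows $H_A\in\reals^{i\times k}$ (an $i\times k$ Gaussian, so $\sigma_{\min}(H_A)\ge 1/\poly(d)$ with high probability), take $W=H_A^\dagger$ so the top-$i$ component of $C^L G W$ is $U_A\Lambda_A^L$ with $\Lambda_A\succeq\lambda_i(C) I$, bound only the contamination from the tail block (eigenvalues $\le\tau$) by $\tau^L\|H\|\,\|W\|\le\poly(d)\,(1-\epsilon/2)^L\lambda_i(C)^L$, and handle the middle eigenvalues in $(\tau,\lambda_i(C))$ not by domination but by the observation that any component lying in the eigenspace of eigenvalues above $\tau$ contributes at least $\tau$ per unit of squared norm, so it can only help; then $v^\top C v\ge\tau\bigl(1-\|P_{\mathrm{tail}}v\|^2\bigr)\ge(1-\epsilon)\lambda_i(C)$ for every unit $v$ in the constructed subspace, and the rest of your write-up goes through. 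Note also that this Courant--Fischer route differs from the actual proof in \cite{musco2015stronger}, which derives the per-vector guarantee in additive form $\epsilon\sigma_{k+1}^2$ from low-rank approximation properties of the output span.
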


\subsection{Precision and high probability bounds} \label{sec:precision}
In order to simplify the presentation we make the following two assumptions: a) The deflation procedure is accurate, i.e., whenever we approximately compute the eigenvectors $u_1,\ldots,u_{s-1}$ and proceed to handle the remaining $k-s+1$ components, the projection of $X$'s columns onto the orthogonal complement to $\{u_1,\ldots,u_{s-1}\}$ is accurate. b) Whenever we use \svrg~or~\accsvrg~to approximately compute matrix-vector products with shifted-inverse matrices, the returned solution is accurate. Since both our method for approximating the eigenvectors and \svrg~are linearly convergent methods, these two assumptions hold in any reasonable computing environment.\footnote{Our assumption is analogous to the usual assumption that exact methods such as the QR algorithm (\cite{trefethen1997numerical}) can find the SVD of $X$ in time $O(nd^2)$ (this assumption is used in the analysis of both Power iteration and Lanczos). As mentioned in \cite{musco2015stronger}, the Abel-Ruffini Theorem implies that an exact SVD is incomputable. Nevertheless, such assumptions are reasonable once we establish high accuracy methods that converge rapidly to the exact solution.} Furthermore, the (theoretical) challenge of taking into account the noise arising from both procedures can be carried out using the established framework of noisy Power iteration (\cite{hardt2014noisy, balcan2016improved}) while incurring only polylogarithmic computational overhead.\footnote{This is essentially the approach taken in \cite{jin2015robust}.} 

There is only one source of randomization in our algorithm, namely the initialization of \algref{alg:subIter}. Since we use this algorithm $\tO(\poly(p))$ times and since the probability of failure scales like $\exp(-d)$, our statements hold with high probability.

\section{Gap-based Approach for Low-Rank Approximation} \label{sec:main}
In this section we describe our algorithm in detail and prove the main result. We assume that we are given as an input a parameter $\Delta>0$ which satisfies 
$$
\Delta \le \lambda_k-\lambda_{p+1} \le 2\Delta~.
$$
Note that we can find such a $\Delta$ with negligible incurred runtime overhead of $O\left(\log \left(\frac{1}{\lambda_k-\lambda_p} \right) \right)$. We view the parameter $\Delta$ as a ``Gap Budget''. Indeed, as will become apparent soon, one can adjust $\Delta$ and the oversampling parameter $p$ in accordance.

\subsection{The Partitioning strategy}
Assume that we already computed the first $s-1$ leading eigenvectors of $A=XX^\top$, $u_1,\ldots, u_{s-1}$. Denote by 
$$
I_0 = \{1,\ldots,k\},~I_{\textrm{prev}} = \{1,\ldots,s-1\}~,I= I_0 \setminus I_{\textrm{prev}} = \{s,\ldots,k\}~.
$$ 
Assume that the deflation is accurate, i.e., we already applied the projection $(I-\sum_{i=1}^{s-1}u_i u_i^\top)$ to the columns of the input matrix $X$. We would like to extract a subinterval of the form $\{s,\ldots,q\} \subseteq I$ such that the gap between $\lambda_q$ and the proceeding eigenvalues would allow us to compute the eigenspace corresponding $\{s,\ldots,q\}$ reasonably fast. We distinguish between several gap scales:
\begin{enumerate}
\item
We first seek for a (multiplicative) gap\footnote{We interchangeably refer both to the (multiplicative) gaps between the $\sigma_i$'s and to the gaps between the $\lambda_i$'s. It is easily seen that the corresponding expressions are of the same order of magnitude.} of order $\poly(1/p)$. If we find such a gap then we use the Power iteration (without neither preconditioning or oversampling) to approximate $u_s,\ldots,u_q$ in time $\tilde{O}(nd \,\poly(p))$.
\item
Otherwise, we seek for an additive gap of order $\Delta$. If we find such a gap, then we use the shifted inverse Power iteration (without oversampling) to extract $u_s,\ldots,u_q$. As we shall see, by requiring that $q$ is the minimal index in $I$ with this property and choosing a shift $\lambda$ with $\lambda - \lambda_s = a\Delta$ for some constant $a \in (0,1)$, we ensure that the multiplicative gap between the corresponding eigenvalues of $\lambda I-A$ is $O(\poly(p))$. Also, the fact that we have not found a multiplicative gap of order $1/p$ implies that $\lambda_S$ and $\lambda_k$ are of the same order, hence the runtime of SVRG scales with the ``right'' gap (see \corref{cor:mainCond}).
\item
Otherwise, we simply return $q=k$. We will then use the shifted-inverse Power iteration with oversampling in order to utilize the gap of order $\Delta$ between $\lambda_k$ and $\lambda_{p+1}$.
\end{enumerate}
Obviously, one difficulty is that we do not know the eigenvalues. Hence, we will derive estimates both of the multiplicative and the additive gaps.

\subsubsection{Searching for multiplicative gaps of order $\poly(1/p)$} \label{sec:largeDetect}
By applying \algref{alg:musco} to the deflated matrix $A_{-(s-1)} = (I-\sum_{i=1}^{s-1} u_i u_i^\top)A(I-\sum_{i =1}^{s-1} u_i u_i^\top)$ with target dimension $k-s+1$ and accuracy $\epsilon' = 1/(9p^2)$, we obtain $\hlam_s,\ldots,\hlam_k$ which satisfy
\[
(1-\epsilon') \lambda_i \le  \hlam_i \le (1-\epsilon')^{-1} \lambda_i~~~\textrm{for all}~ i \in I~.
\]
(note that we refer to the indices of the matrix $A$ before deflation). Based on these estimates, we can detect gaps of order $\poly(1/p)$.
\begin{lemma} \label{lem:largeDetect}
Suppose that $\hlam_{i+1} \le \hlam_i (1-p^{-2})$. Then, $\lambda_{i+1} \le \lambda_i (1-p^{-4})$. Conversely, if $\lambda_{i+1} \le \lambda_i (1-p^{-1})$, then $\hlam_{i+1} \le \hlam_i (1-p^{-2})$. 
\end{lemma}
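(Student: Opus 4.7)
The plan is straightforward: both implications are purely algebraic consequences of the two-sided multiplicative estimate
\[
(1-\epsilon')\,\lambda_j \;\le\; \hlam_j \;\le\; (1-\epsilon')^{-1}\,\lambda_j
\qquad \text{for } j \in I,
\]
which \thmref{thm:musco} guarantees (with high probability) for the output of \algref{alg:musco} applied to $A_{-(s-1)}$ with accuracy $\epsilon' = 1/(9p^2)$. So the whole lemma reduces to plugging this bound into the assumed relation and tracking how the factor $(1-\epsilon')^{\pm 2}$ interacts with the assumed gap.

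For the first (``detection'') direction, I would start from the hypothesis $\hlam_{i+1} \le \hlam_i(1-p^{-2})$, and sandwich:
\[
\lambda_{i+1} \;\le\; \frac{\hlam_{i+1}}{1-\epsilon'} \;\le\; \frac{(1-p^{-2})\,\hlam_i}{1-\epsilon'} \;\le\; \frac{1-p^{-2}}{(1-\epsilon')^2}\,\lambda_i .
\]
It then suffices to verify the numerical inequality $(1-p^{-2})/(1-\epsilon')^2 \le 1-p^{-4}$ for $\epsilon' = 1/(9p^2)$, which is immediate from $(1-\epsilon')^{-2} \le 1 + 3\epsilon'$ (valid since $\epsilon' \le 1/9$) and then $1 - p^{-2} + 3\epsilon' \le 1 - 2/(3p^2) \le 1 - p^{-4}$ for all $p \ge 2$. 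The case $p = 1$ is vacuous since then $p^{-2} = 1$ forces $\hlam_{i+1} \le 0$.

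For the converse, starting from $\lambda_{i+1} \le \lambda_i(1-p^{-1})$ I would again chain the two-sided bound in the opposite order:
\[
\hlam_{i+1} \;\le\; \frac{\lambda_{i+1}}{1-\epsilon'} \;\le\; \frac{(1-p^{-1})\,\lambda_i}{1-\epsilon'} \;\le\; \frac{1-p^{-1}}{(1-\epsilon')^2}\,\hlam_i ,
\]
and it remains to check $(1-p^{-1})/(1-\epsilon')^2 \le 1-p^{-2}$, which after rearranging amounts to $(1-\epsilon')^{-2} - 1 \le p^{-1}\frac{1-p^{-2}}{1-p^{-1}}\cdot\text{(positive)}$; quantitatively, $(1-\epsilon')^{-2} \le 1 + 3\epsilon' = 1 + 1/(3p^2)$, so $(1-p^{-1})(1-\epsilon')^{-2} \le 1 - p^{-1} + 1/(3p^2) \le 1 - p^{-2}$ whenever $p^{-1} - p^{-2} \ge 1/(3p^2) - 0$, i.e., for all $p \ge 2$, and trivially for $p=1$.

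There is no real obstacle here — the only place to be careful is choosing the numerical constant in $\epsilon' = 1/(9p^2)$ so that the two absorbed factors of $(1-\epsilon')^{-1}$ still leave slack on the right-hand side. The constant $1/9$ was chosen precisely so that $(1-\epsilon')^{-2} - 1$ is dominated by both $p^{-4}$ (for the first direction, with significant room) and by $p^{-1} - p^{-2}$ (for the second). Once that is observed, each direction is a two-line computation.
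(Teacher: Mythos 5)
Your proposal is correct and follows essentially the same route as the paper: chain the two-sided estimates $(1-\epsilon')\lambda_i \le \hlam_i \le (1-\epsilon')^{-1}\lambda_i$ from \thmref{thm:musco} and absorb the factor $(1-\epsilon')^{-2}$ into the gap using $\epsilon'=1/(9p^2)$. The only cosmetic difference is that the paper uses the looser bound $(1-\epsilon')^{-2}\le 1+9\epsilon'=1+p^{-2}$, which makes both final steps exact identities $\bigl((1+p^{-2})(1-p^{-2})=1-p^{-4}$ and $(1+p^{-1})(1-p^{-1})=1-p^{-2}\bigr)$ with no case distinction, whereas your sharper $1+3\epsilon'$ bound requires the (harmless, and in context automatic since $p>k\ge1$) restriction $p\ge 2$.
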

\lemref{lem:largeDetect} suggests the following simple partitioning rule: if exists, return any $q$ with $\hlam_{q+1} \le \hlam_q (1-p^{-2})$ (see \algref{alg:largeDetect}).
\begin{algorithm}
\caption{Detection of multiplicative gaps of order $\poly(1/p)$}
\label{alg:largeDetect}
\begin{algorithmic}
\STATE \textbf{Input: } $I=\{s,\ldots,k\}, \hlam_s, \ldots, \hlam_q$
\STATE If exists, return any $q \in I \setminus \{k\}$ which satisfies $\hlam_{q+1} \le \hlam_q (1-p^{-2})$
\STATE Otherwise, return $-1$
\end{algorithmic}
\end{algorithm}
We  deduce the following implication.
\begin{corollary} \label{cor:largeDetect}
Suppose that the partitioning procedure returns $q \in I$ with $\hlam_{q+1} \le \hlam_q (1-p^{-2})$. Then the condition number when applying the Power iteration to $A_{-(s-1)}$ with target dimension $k-s+1$ (and no oversampling) is $\poly(p)$. Conversely, if the procedure does not find such $q$, then $\lambda_k \ge \lambda_s/10$.
\end{corollary}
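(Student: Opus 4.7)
The plan is to derive both directions by combining the multiplicative approximation guarantee of \thmref{thm:musco} -- namely $(1-\epsilon')\lambda_i \le \hlam_i \le (1-\epsilon')^{-1}\lambda_i$ with $\epsilon' = 1/(9p^2)$ -- with the gap-conversion statement of \lemref{lem:largeDetect}. No deeper ingredients are needed.

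For the first direction, assume \algref{alg:largeDetect} outputs some $q \in I$ with $\hlam_{q+1} \le \hlam_q(1 - p^{-2})$. The first implication of \lemref{lem:largeDetect} then immediately gives $\lambda_{q+1} \le \lambda_q(1 - p^{-4})$, i.e.\ $G_{q, q+1} \ge p^{-4}$. After deflation of $u_1, \ldots, u_{s-1}$ the leading eigenvalues of $A_{-(s-1)}$ coincide with $\lambda_s, \lambda_{s+1}, \ldots$, so when Power iteration is invoked to recover $u_s, \ldots, u_q$, the relevant spectral gap governing the rate in \thmref{thm:zhang} is precisely $G_{q, q+1}$, yielding a condition number of $\poly(p)$.

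For the converse, suppose no such $q$ exists, i.e.\ $\hlam_{i+1} > \hlam_i(1 - p^{-2})$ for every $i \in \{s, \ldots, k-1\}$. Telescoping this chain of inequalities gives
\[
\frac{\hlam_k}{\hlam_s} \;>\; (1 - p^{-2})^{k-s} \;\ge\; (1 - p^{-2})^{p-1},
\]
where the last step uses $k - s < k < p$. By Bernoulli's inequality, $(1 - p^{-2})^{p-1} \ge 1 - (p-1)/p^2 \ge 1 - 1/p \ge 1/2$ whenever $p \ge 2$, which is guaranteed since $p > k \ge 1$. Converting from $\hlam$ back to $\lambda$ via the two-sided approximation costs another factor of $(1-\epsilon')^2 \ge 1 - 2/(9p^2)$, so
\[
\frac{\lambda_k}{\lambda_s} \;\ge\; (1 - \epsilon')^2 \, (1 - p^{-2})^{p-1} \;\ge\; \frac{1}{10},
\]
with substantial slack for every $p \ge 2$.

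The only subtle point worth flagging is that the approximation slack $\epsilon' = 1/(9p^2)$ chosen in \algref{alg:musco} must be small enough that compounding it together with $(1 - p^{-2})^{k-s}$ over the $k - s < p$ telescoped steps still leaves a universal constant lower bound -- this is precisely the reason the accuracy was taken to scale as $\Theta(1/p^2)$ rather than $\Theta(1/p)$. Beyond that, the proof is a short computation and I do not anticipate any genuine obstacle.
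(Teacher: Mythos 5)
Your proposal is correct, and the first direction is exactly the paper's argument: the first implication of \lemref{lem:largeDetect} gives $\lambda_{q+1}\le\lambda_q(1-p^{-4})$, so the condition number of the deflated Power iteration is at most $p^4=\poly(p)$. For the converse you take a genuinely different (and slightly sharper) route. The paper applies the contrapositive of the \emph{second} implication of \lemref{lem:largeDetect} index by index, converting $\hlam_{i+1}>\hlam_i(1-p^{-2})$ into $\lambda_{i+1}\ge\lambda_i(1-p^{-1})$, and then telescopes the true eigenvalues, using $1-x\ge e^{-2x}$ to conclude $\lambda_k\ge(1-p^{-1})^{p}\lambda_s\ge e^{-2}\lambda_s\ge\lambda_s/10$. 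You instead telescope the estimates $\hlam_i$ directly, keeping the sharper per-step factor $(1-p^{-2})$, and pay the approximation error only once at the end via $(1-\epsilon')^2$; this bypasses the second half of \lemref{lem:largeDetect} entirely, needs only Bernoulli's inequality rather than the exponential bound, and actually yields a stronger constant (roughly $\lambda_k\ge 0.47\,\lambda_s$ versus the paper's $e^{-2}\lambda_s$), both of which clear the stated $1/10$ threshold. The paper's per-step conversion is what degrades its constant, but it has the minor virtue of reusing the lemma as stated; your accounting shows the $\Theta(1/p^2)$ accuracy of \algref{alg:musco} is what makes the one-shot conversion harmless. The only point to watch is cosmetic: your first-direction remark identifies the relevant gap as $G_{q,q+1}$ for the run that recovers $u_s,\ldots,u_q$, which matches what \algref{alg:main} actually does, even though the corollary's phrasing mentions target dimension $k-s+1$.
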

\subsubsection{Searching for additive gaps} \label{sec:deltaDetect}
In the absence of a multiplicative gap of order $\poly(1/p)$, we turn to search for additive gaps of order $\Delta$. Since we prefer to avoid applying the Power iteration with an accuracy parameter of order $\Delta$, we need to employ a more sophisticated estimating strategy. To this end, we  develop an iterative scheme that updates a shift parameter $\lambda$ in order to obtain better approximations to the gaps between the eigenvalues. Let $\lambda \in [\lambda_s+\Delta, 2\lambda_s]$ be\footnote{Recall that we assume that $\lambda_s \ge \lambda_k \gg \Delta$ (otherwise conditioning is not needed).} an initial shift parameter. Such a $\lambda$ can be easily found by applying \algref{alg:musco} to $A_{-(s-1)}$ (see \secref{sec:tuneApp} in the appendix). Consider the deflated shifted matrix 
\begin{equation} \label{eq:deflatedShift}
D_{-(s-1)} = (I - \sum_{i=1}^{s-1} u_i u_i^\top) (\lambda I - A)^{-1} (I- \sum_{i=1}^{s-1} u_i u_i^\top) ~.
\end{equation}
By applying \algref{alg:musco} to $D_{-(s-1)}^{-1}$ a with a target dimension $k-s+1$ and a reasonably large accuracy parameter $\epsilon' = \frac{1}{9p}$, we find $\tlam_s,\ldots, \tlam_k$ which satisfy
$$
(1-\epsilon') (\lambda - \lambda_i)^{-1} \le \tlam_i \le (1-\epsilon')^{-1} (\lambda-\lambda_i)^{-1}~~\textrm{for all}~~i \in I~.
$$
By inverting, we obtain the following approximation to $\lambda-\lambda_i$:
\begin{equation} \label{eq:approxShiftGap}
(1-\epsilon') \tlam_i^{-1} \le  \lambda-\lambda_i \le (1-\epsilon')^{-1} \tlam_i^{-1}~~\textrm{for all}~i \in I~.
\end{equation}
Since for any $q \in I \setminus \{k\}$, $\lambda_q-\lambda_{q+1} = (\lambda-\lambda_{q+1}) - (\lambda-\lambda_q) \approx \tlam_{q+1}^{-1} - \tlam_q^{-1}$, we can derive upper and lower bounds on the gaps between consecutive eigenvalues. Based on these bounds, in \algref{alg:deltaDetect} we suggest a simple partitioning rule.
\begin{algorithm}
\caption{Detection of additive gaps of order $\Delta$}
\label{alg:deltaDetect}
\begin{algorithmic}
\STATE \textbf{Input: } $I=\{s, \ldots,k\}$, $\Delta$, $\tlam^{-1}_i$ for all $i \in I$
\IF {$J=\{q' \in I \setminus \{k\}: \tlam^{-1}_{i+1} - \tlam^{-1}_i \ge \frac{5}{9} \Delta\} \neq \emptyset$}
\STATE Return $q = \min J$ 
\ELSE
\STATE Return $q=k$ 
\ENDIF
\end{algorithmic}
\end{algorithm}
The success of this method depends on the distance between $\lambda$ and $\lambda_s$. Specifically, our analysis requires that
\begin{equation} \label{eq:shiftClose} 
\frac{\Delta}{27}  \le \tlam^{-1}_s \le \frac{\Delta}{5}    \underbrace{\Rightarrow}_{\epsilon' <1/10} \frac{\Delta}{30}  = \frac{9}{10} \cdot \frac{\Delta}{27} \le \lambda-\lambda_s \le \frac{10}{9} \cdot \frac{\Delta}{5}  = \frac{2 \Delta}{9}~.
 \end{equation} 
Inspired by \cite{garber2015fast,jin2015robust}, in \secref{sec:tuneApp} we describe a an efficient method which enforces (\ref{eq:shiftClose}) by iteratively deriving constant approximations to $\lambda-\lambda_s$ and updating the shift accordingly. Assuming that (\ref{eq:shiftClose}) holds, we turn to prove the correctness of the partitioning rule. The next lemma implies that gaps of desired magnitude are identified by our method.
\begin{lemma} \label{lem:deltaDetect}
Let $\beta > 0$. Suppose that for some $q \in I \setminus \{k\}$, $\lambda_q-\lambda_{q+1} \ge \beta$ and $q$ is the minimal index with this property. Then, $\tlam^{-1}_{q+1} - \tlam^{-1}_q \ge \frac{5}{9} \beta$. 
\end{lemma}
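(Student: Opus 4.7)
The plan is to translate the true gap $\lambda_q - \lambda_{q+1} \ge \beta$ into a gap in the reciprocal estimates $\tlam^{-1}_{q+1}-\tlam^{-1}_q$, using the approximation guarantee (\ref{eq:approxShiftGap}) for the upper and lower endpoints, and then to control the residual error term via the minimality of $q$ together with the calibration bound (\ref{eq:shiftClose}).

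Concretely, I would first apply the lower bound from (\ref{eq:approxShiftGap}) at index $q+1$ and the upper bound at index $q$ to obtain
\[
\tlam^{-1}_{q+1} - \tlam^{-1}_q \;\ge\; (1-\epsilon')(\lambda-\lambda_{q+1}) - (1-\epsilon')^{-1}(\lambda-\lambda_q)~.
\]
Writing $\alpha := \lambda - \lambda_q$ and using $\lambda-\lambda_{q+1}=\alpha+(\lambda_q-\lambda_{q+1})\ge \alpha+\beta$, this rearranges to
\[
\tlam^{-1}_{q+1} - \tlam^{-1}_q \;\ge\; (1-\epsilon')\beta \;-\; \alpha\cdot\bigl[(1-\epsilon')^{-1}-(1-\epsilon')\bigr]~.
\]
A short calculation shows that $(1-\epsilon')^{-1}-(1-\epsilon') = (2\epsilon'-\epsilon'^2)/(1-\epsilon') \le 9\epsilon'/4 = 1/(4p)$ for our choice $\epsilon' = 1/(9p)$, so everything reduces to producing a good upper bound on $\alpha$.

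The main leverage is the minimality of $q$: for every $s\le i<q$ we have $\lambda_i-\lambda_{i+1}<\beta$, which telescopes into $\lambda_s-\lambda_q < (q-s)\beta \le (k-s)\beta < p\beta$. Combined with the calibration bound $\lambda-\lambda_s\le 2\Delta/9$ from (\ref{eq:shiftClose}), this gives $\alpha \le 2\Delta/9 + p\beta$. Substituting back,
\[
\tlam^{-1}_{q+1} - \tlam^{-1}_q \;\ge\; \bigl(1-\tfrac{1}{9p}\bigr)\beta \;-\; \frac{2\Delta/9 + p\beta}{4p} \;\ge\; \tfrac{5}{9}\beta~,
\]
where the final inequality is routine arithmetic (particularly clean in the main application $\beta=\Delta$, where the right-hand side evaluates to at least $2\Delta/3$ already for $p=2$).

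The only real obstacle is controlling $\alpha=\lambda-\lambda_q$: a priori this could be as large as $\lambda-\lambda_k$, in which case the multiplicative estimation error on $\tlam^{-1}_q$ could completely swamp the additive signal of size $\beta$. The minimality hypothesis is exactly what caps the accumulated drop $\lambda_s-\lambda_q$ by $(k-s)\beta < p\beta$, and together with the tight calibration $\epsilon'=1/(9p)$ this forces the error below $4\beta/9$, leaving precisely the desired $\frac{5}{9}\beta$ margin.
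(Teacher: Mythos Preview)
Your approach is essentially the paper's: apply the two-sided estimate (\ref{eq:approxShiftGap}) at indices $q$ and $q+1$, isolate the error factor $\mu=(1-\epsilon')^{-1}-(1-\epsilon')$ multiplied by (an upper bound on) $\lambda-\lambda_q$, and control that quantity via the minimality of $q$ together with $\epsilon'=1/(9p)$. The paper works with $\tlam_q^{-1}$ instead of your $\alpha=\lambda-\lambda_q$ and uses the coarser bound $\mu\le 3\epsilon'$, but the structure and the key idea are identical.

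One remark on the final step: your displayed inequality retains the additive term $\Delta/(18p)$, so the conclusion $\ge\tfrac{5}{9}\beta$ is only literally ``routine arithmetic'' when $\beta$ is comparable to $\Delta$ (which covers the sole application $\beta=\Delta$, as you note). The paper's proof hides the very same issue by writing $\tlam_q^{-1}\le (1-\epsilon')^{-1}k\beta$ and silently dropping the $\lambda-\lambda_s$ contribution to $\lambda-\lambda_q$; your version is the more honest accounting of the two.
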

The following lemma shows that gaps detected by our method are indeed sufficiently large.
\begin{lemma} \label{lem:deltaSound}
Suppose that our method returns $q$ with $q<k$. Then, $\lambda_q-\lambda_{q+1} \ge \Delta/9$.
\end{lemma}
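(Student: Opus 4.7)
The plan is to translate $\tlam^{-1}_{q+1} - \tlam^{-1}_q \ge \frac{5}{9}\Delta$ (which holds because $q \in J$) into the desired lower bound on $\lambda_q - \lambda_{q+1}$ via the two-sided approximation (\ref{eq:approxShiftGap}), while using the minimality of $q$ in $J$ to control an additive correction that appears along the way.

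Concretely, I would split $\lambda_q - \lambda_{q+1} = (\lambda - \lambda_{q+1}) - (\lambda - \lambda_q)$ and apply (\ref{eq:approxShiftGap}) in opposite directions to the two terms, obtaining
\[
\lambda_q - \lambda_{q+1} \;\ge\; (1-\epsilon')\tlam^{-1}_{q+1} - (1-\epsilon')^{-1}\tlam^{-1}_q \;=\; (1-\epsilon')\bigl(\tlam^{-1}_{q+1} - \tlam^{-1}_q\bigr) \;-\; \tfrac{2\epsilon'-\epsilon'^2}{1-\epsilon'}\,\tlam^{-1}_q.
\]
Since $\tlam^{-1}_{q+1} - \tlam^{-1}_q \ge \frac{5}{9}\Delta$, the main summand is at least $(1-\epsilon')\cdot\frac{5\Delta}{9}$, and the task reduces to upper bounding $\tlam^{-1}_q$ tightly enough that the correction term is strictly less than $(1-\epsilon')\cdot\frac{5\Delta}{9} - \frac{\Delta}{9}$.

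The upper bound on $\tlam^{-1}_q$ comes from telescoping using minimality: for every $q' \in \{s, \ldots, q-1\}$, $\tlam^{-1}_{q'+1} - \tlam^{-1}_{q'} < \frac{5}{9}\Delta$, so combining with the estimate $\tlam^{-1}_s \le \Delta/5$ from (\ref{eq:shiftClose}) yields
\[
\tlam^{-1}_q \;<\; \frac{\Delta}{5} + (q-s)\cdot\frac{5\Delta}{9} \;=\; O(p\Delta),
\]
since $q - s \le k - 2 \le p - 2$. Plugging the choice $\epsilon' = 1/(9p)$ into the correction term then makes $\tfrac{2\epsilon'}{1-\epsilon'}\,\tlam^{-1}_q$ a small multiple of $\Delta$, and the target inequality $\lambda_q - \lambda_{q+1} \ge \Delta/9$ follows by a direct arithmetic verification. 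The only delicate point is this final constant-chasing, which is precisely what dictates the scaling $\epsilon' = O(1/p)$ in the definition of the $\tlam_i$'s; without it, the correction term would not be dominated by the $\frac{5\Delta}{9}$ main term once $q-s$ grows with $p$.
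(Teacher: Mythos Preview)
Your proposal is correct and follows essentially the same argument as the paper: both derive the lower bound $\lambda_q-\lambda_{q+1}\ge(1-\epsilon')(\tlam^{-1}_{q+1}-\tlam^{-1}_q)-\mu\,\tlam^{-1}_q$ with $\mu=(1-\epsilon')^{-1}-(1-\epsilon')$, telescope $\tlam^{-1}_q$ using the minimality of $q$ together with $\tlam^{-1}_s\le\Delta/5$ to get $\tlam^{-1}_q=O(k\Delta)$, and finish by substituting $\epsilon'=1/(9p)$. The paper carries out the final arithmetic explicitly (using $\mu\le 3\epsilon'$ and arriving at $\tfrac{40}{81}\Delta-\tfrac{2}{9}\Delta\ge\Delta/9$), which you leave as a routine verification, but the structure is identical.
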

As mentioned above, in the case that $q<k$, we will be interested in the gap between the $q$-th and the $(q+1)$-th eigenvalues of $D^{-1}$. Otherwise, we will be interested in the gap between the $k$-th and the $(p+1)$-th eigenvalues. Thus, we define
\[
\tG  = 
\begin{cases}  
(\lambda_q(D^{-1}) - \lambda_{q+1} (D^{-1}))/\lambda_q(D^{-1}) & q < k \\ 
(\lambda_k(D^{-1}) - \lambda_{p+1} (D^{-1}))\lambda_p(D^{-1}) & q = k
\end{cases}
\]
\begin{corollary} \label{cor:mainCond}
Assume that $\lambda$ satisfies (\ref{eq:shiftClose}) and let $q$ be the output of \algref{alg:deltaDetect}. The condition number when applying the Power iteration to the shifted inverse matrix $D_{-(s-1)}^{-1}$ (\ref{eq:deflatedShift}) is $G^{-1} = O(k)=O(p)$. Furthermore, the complexities of \svrg~and \accsvrg~when applied to approximately compute matrix-vector multiplications with the matrix $D_{-(s-1)}^{-1}$ are $\tilde{O} ((\nnz(X)+d \,\sr(X_{-(s-1)}) G_{k,p+1}^{-2})k)$ and $\tilde{O} ((\nnz(X)^{3/4}(d \,\sr(X_{-(s-1)}))^{1/4} G_{k,p+1}^{-1/2})k)$, respectively.
\end{corollary}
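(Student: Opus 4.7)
The plan is to prove the two assertions of the corollary separately: first the bound $\tG^{-1}=O(p)$ on the multiplicative gap of $D_{-(s-1)}$, and then the cost estimates for the SVRG/AccSVRG inner calls. Both rest on a common telescoping step that controls $\lambda-\lambda_q$ (and $\lambda-\lambda_k$ when $q=k$) in terms of $k\Delta$.

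\textbf{Gap bound.} Since $q$ is the minimal index selected, for every $i\in\{s,\dots,q-1\}$ the algorithm did not return $i$, so $\tlam^{-1}_{i+1}-\tlam^{-1}_i<\tfrac{5}{9}\Delta$. Telescoping and using $\tlam^{-1}_s\le\Delta/5$ from (\ref{eq:shiftClose}),
$$
\tlam^{-1}_q \;\le\; \tlam^{-1}_s + \tfrac{5}{9}(q-s)\Delta \;\le\; \Delta/5 + \tfrac{5}{9}k\Delta \;=\; O(k\Delta),
$$
and (\ref{eq:approxShiftGap}) then yields $\lambda-\lambda_q\le(1-\epsilon')^{-1}\tlam^{-1}_q=O(k\Delta)$. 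In the case $q<k$, the top $k-s+1$ eigenvalues of $D_{-(s-1)}$ equal $(\lambda-\lambda_i)^{-1}$ for $i=s,\dots,k$, and a short computation shows that the multiplicative gap at positions $(q-s+1,q-s+2)$ simplifies to $(\lambda_q-\lambda_{q+1})/(\lambda-\lambda_{q+1})$; the numerator is at least $\Delta/9$ by Lemma \ref{lem:deltaSound}. A two-way sub-case analysis, according to whether $\lambda_q-\lambda_{q+1}\ge \lambda-\lambda_q$ or not, yields $\tG\ge\tfrac12$ in the first case and $\tG\ge(\Delta/9)/O(k\Delta)=\Omega(1/k)$ in the second. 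When $q=k$, the same telescoping gives $\lambda-\lambda_k=O(k\Delta)$, so $\lambda-\lambda_{p+1}\le(\lambda-\lambda_k)+2\Delta=O(k\Delta)$; combined with $\lambda_k-\lambda_{p+1}\ge\Delta$ this again gives $\tG=\Omega(1/k)=\Omega(1/p)$.

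\textbf{Complexity.} I convert $\Delta$ into the multiplicative gap: writing $\lambda_k-\lambda_{p+1}=(\sigma_k+\sigma_{p+1})(\sigma_k-\sigma_{p+1})$ and using $\sigma_k\le\sigma_k+\sigma_{p+1}\le2\sigma_k$ gives $\Delta=\Theta(\lambda_k G_{k,p+1})$. Corollary \ref{cor:largeDetect} (applicable because Algorithm \ref{alg:largeDetect} did not detect any multiplicative gap of order $1/p$) provides $\lambda_s\le 10\lambda_k$, and (\ref{eq:shiftClose}) gives $\lambda-\lambda_s\ge\Delta/30$. These three estimates combine to
$$
\frac{\lambda_s^2}{(\lambda-\lambda_s)^2} \;\le\; \frac{(10\lambda_k)^2}{(\Delta/30)^2} \;=\; O\!\left(G_{k,p+1}^{-2}\right).
$$
Plugging this into Theorem \ref{thm:svrgSolve}, applied to the deflated matrix $X_{-(s-1)}$ with shift $\lambda$, gives per-vector SVRG cost $\tO(\nnz(X)+d\,\sr(X_{-(s-1)})G_{k,p+1}^{-2})$ and the analogous accelerated cost. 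The extra factor $k$ in the corollary comes from the $p-s+1\le p=O(k)$ columns of the iterate $S$ against which $D_{-(s-1)}$ is multiplied per subspace-iteration step.

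The main obstacle is the telescoping: the approximation (\ref{eq:approxShiftGap}) is multiplicative in the slack $1\pm\epsilon'$, so I must check that after summing $O(k)$ terms and then converting back from $\tlam^{-1}$ to $\lambda-\lambda_i$ the $(1-\epsilon')^{\pm1}$ factors remain a benign constant. With the choice $\epsilon'=1/(9p)$ this is immediate, and the constants in Algorithm \ref{alg:deltaDetect} ($5/9$ versus $1/9$) are precisely what make the sub-case constants in the gap bound come out positive. The rest reduces to bookkeeping on the scales $\Delta$, $\lambda_k$, and $G_{k,p+1}$.
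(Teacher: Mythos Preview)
Your argument is correct and follows essentially the same route as the paper: the paper writes $\tG^{-1}=(\lambda-\lambda_{q+1})/(\lambda_q-\lambda_{q+1})$ and splits it into three additive pieces (bounding $\lambda-\lambda_s$ by (\ref{eq:shiftClose}), $\lambda_s-\lambda_q$ by $k\Delta$ via the contrapositive of \lemref{lem:deltaDetect}, and the numerator by \lemref{lem:deltaSound}), whereas you reach the same $O(k)$ bound by telescoping the estimates $\tlam_i^{-1}$ directly and then doing a two-case split --- both are equivalent. One small discrepancy: the paper attributes the extra factor $k$ in the SVRG cost to the deflation projection (computing $X_{-(s-1)}v$ costs $\nnz(X)$ plus $O(d(s-1))=O(dk)$ for the projection, which they write loosely as $\nnz(X)\cdot k$), not to the number of columns of the subspace iterate; your explanation via ``$p-s+1\le p=O(k)$ columns'' also tacitly assumes $p=O(k)$, which is not among the stated hypotheses.
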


\noindent \textbf{Tuning the shift parameter: } In \algref{alg:shiftTune} we suggest a simple method that yields a shift parameter $\lambda$ and a corresponding estimate $\tlam^{-1}_s$ that satisfy (\ref{eq:approxShiftGap}). We defer the description and the analysis of this method to the appendix.

\subsection{The Algorithm}
All the pieces are in place. Our algorithm (see \algref{alg:main}) iteratively combines the partitioning procedure with the corresponding application of Power iteration. We turn to prove the main result. We start by stating a slightly weaker result.
\begin{theorem} \label{thm:mainResultW}
Let $X \in \reals^{n \times d}$ be the input matrix and let $k$ and $p$ be the target dimension and the oversampling parameter, respectively $(k < p <d)$. Suppose that $\sigma_k-\sigma_p>0$ and define $G_{k,p+1}$ as in (\ref{eq:condition}). For any $\delta,\epsilon \in (0,1)$, with probability at least $1-\delta$, \algref{alg:main} finds an orthogonal rank-$k$ projection matrix $\hPi$ which satisfies 
\[
\|X-\hPi X\|_F \le \|X-X_k\|_F +\epsilon\|X\|_F~,
\]
in time $\tilde{O} ((\nnz(X) + d \,\tsr(X) G_{{k,p+1}}^2)\, \poly(p))$ or $\tilde{O} ((\nnz(X))^{3/4} (d \tsr(X))^{1/4} \sqrt{G_{{k,p+1}} })\, \poly(p))$ if acceleration is used. 
\end{theorem}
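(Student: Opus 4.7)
The plan is to argue by induction on the deflation counter $s$, maintaining the invariant that after processing $s-1$ components we have extracted an orthonormal matrix $\hU^{(s-1)} = [\hu_1,\ldots,\hu_{s-1}]$ whose columns approximate $u_1,\ldots,u_{s-1}$ to the accuracy handled by the exact-deflation assumption of Section~\ref{sec:precision}. Under this assumption the analysis applies to the deflated matrix $A_{-(s-1)} = (I-\sum_{i<s}\hu_i \hu_i^\top)A(I-\sum_{i<s}\hu_i \hu_i^\top)$, and the stable rank of its factor is bounded by $\tsr(X)$.

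At iteration $s$ the algorithm first calls Algorithm~\ref{alg:musco} on $A_{-(s-1)}$ with accuracy $1/(9p^2)$ and feeds the output into Algorithm~\ref{alg:largeDetect}. If Algorithm~\ref{alg:largeDetect} returns some $q \in I\setminus\{k\}$, Corollary~\ref{cor:largeDetect} guarantees that the unpreconditioned Power iteration on $A_{-(s-1)}$ with target dimension $q-s+1$ has condition number $\poly(p)$, so Theorem~\ref{thm:zhang} yields the desired subspace after $L=\tO(\poly(p))$ iterations in total time $\tO(\nnz(X)\poly(p))$. Otherwise Corollary~\ref{cor:largeDetect} gives $\lambda_k \ge \lambda_s/10$; we invoke Algorithm~\ref{alg:shiftTune} to produce a shift $\lambda$ satisfying~(\ref{eq:shiftClose}), and then Algorithm~\ref{alg:deltaDetect} returns some $q$. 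If $q<k$ we apply the (non-oversampled) shifted-inverse Power iteration to $D_{-(s-1)}$, and if $q=k$ we apply the shifted-inverse Power iteration with oversampling~$p$ to exploit the gap $\lambda_k-\lambda_{p+1}\ge\Delta$. In both sub-cases Corollary~\ref{cor:mainCond} bounds the condition number by $O(p)$, so $\tO(\poly(p))$ outer iterations suffice, each of which performs $p$ matrix-vector products with $D_{-(s-1)}^{-1}$ that by Theorem~\ref{thm:svrgSolve} take time $\tO(\nnz(X)+d\,\sr(X_{-(s-1)})\,G_{k,p+1}^{-2})$ (or the accelerated counterpart).

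Each iteration extracts at least one new component, so there are at most $k\le p$ outer iterations. Summing the per-iteration bound, using $\sr(X_{-(s-1)})\le\tsr(X)$, and absorbing the overhead of the eigenvalue estimation, shift tuning, and the QR and projection steps (all polynomial in $p$ and linear in $\nnz(X)$ or the stable-rank term) into the $\poly(p)$ factor yields the claimed total runtime in both the plain and accelerated variants. The randomization enters only through the $\tO(\poly(p))$ invocations of Algorithm~\ref{alg:subIter}, each failing with probability $\exp(-d)$, so a union bound delivers the high-probability guarantee.

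Finally, to pass from principal angles to Frobenius error, set $\hPi=\hU\hU^\top$ with $\hU=[\hu_1,\ldots,\hu_k]$ and apply the standard angle-to-norm conversion (e.g.\ \cite{wang2015improved}[Section 6.2]): a guarantee $\tan\theta_k(U_k,\hU)\le\epsilon'$ implies $\|X-\hPi X\|_F^2 \le \|X-X_k\|_F^2 + O((\epsilon')^2 \|X\|_F^2)$, and choosing $\epsilon'$ a constant multiple of $\epsilon$ (absorbed into the $\log(1/\epsilon)$ already hidden by $\tO$) gives the additive bound $\|X-\hPi X\|_F\le\|X-X_k\|_F+\epsilon\|X\|_F$. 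The main obstacle I expect is verifying that Algorithm~\ref{alg:shiftTune} efficiently drives the shift into the window~(\ref{eq:shiftClose}) while interacting cleanly with the accuracy assumptions of Section~\ref{sec:precision}; once that is in hand, the remaining steps are a bookkeeping exercise that combines Theorems~\ref{thm:zhang}, \ref{thm:svrgSolve}, \ref{thm:musco} together with Corollaries~\ref{cor:largeDetect} and~\ref{cor:mainCond}.
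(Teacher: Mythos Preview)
Your runtime and high-probability bookkeeping match the paper's argument essentially line for line. The gap is in the correctness step. What the analysis actually delivers is a \emph{block-wise} guarantee: for each deflation interval $I_j=\{s_j,\ldots,q_j\}$ one obtains $\tan(\theta_{k_j}(U^{(j)},\tU^{(j)}))\le \epsilon/k$ via \thmref{thm:zhang} and the condition-number bounds. You instead assert a single global bound $\tan\theta_k(U_k,\hU)\le\epsilon'$ for $\hU=[\hu_1,\ldots,\hu_k]$ and cite an off-the-shelf angle-to-norm conversion. You never argue why the block-wise bounds aggregate into a global one, and this is not automatic: the $\tU^{(j)}$ need not align column-by-column with the $U^{(j)}$, and in the final block oversampling produces a $d\times p'$ matrix with $p'>k_t$ columns, so it is not even clear what your $\hu_{s_t},\ldots,\hu_k$ are. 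Moreover, the algorithm's actual output is not $\hU\hU^\top$ but the best rank-$k$ projection in the column space of the full $d\times p$ matrix $\tU=[\tU^{(1)},\ldots,\tU^{(t)}]$ computed by \algref{alg:angleToNormAlg}.

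The paper sidesteps the aggregation issue entirely. It exhibits one specific rank-$k$ projection $\tPi=\sum_{j}\tPi^{(j)}$ lying in the column space of $\tU$ (for the oversampled last block it picks the optimal $k_t$-dimensional subspace of $\range(\tU^{(t)})$), so that the algorithm's output is at least as good. Then it bounds $\|X-\tPi X\|_F$ directly by the triangle inequality,
\[
\|X-\tPi X\|_F \;\le\; \|X_{-k}\|_F \;+\; \sum_{j=1}^{t}\bigl\|(I-\tPi^{(j)})\,U^{(j)}\Sigma^{(j)}(V^{(j)})^\top\bigr\|_F,
\]
and controls each summand by $\|\Sigma^{(j)}\|\cdot\|(I-\tPi^{(j)})U^{(j)}\|_F\le \|X\|_F\sin\theta_{k_j}(U^{(j)},\tU^{(j)})\le(\epsilon/k)\|X\|_F$, which sums to the additive $\epsilon\|X\|_F$. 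To close your proof you need either this block-by-block Frobenius computation or an explicit argument converting the $t$ separate angle bounds into a single $\theta_k(U_k,\cdot)$ bound; as written, that step is missing.
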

One usually expect to see error bounds that scale with $\epsilon\|X-X_k\|_F$ rather than with $\epsilon \|X\|_F$. Since the dependence of the runtime on $1/\epsilon$ is logarithmic, this is not an issue in our case. From the same reason, it is easy to establish also spectral norm bounds. Indeed, note that at the beginning of \algref{alg:main}, the accuracy parameter $\epsilon$ is rescaled according to some rough upper bound on $\|X\|_F/\|X_{-k}\|$.\footnote{Such an estimate can be easily obtained using \algref{alg:musco}.} The reason for this scaling operation is now apparent.
\begin{corollary} \label{cor:mainResult}
Under the same conditions as in \thmref{thm:mainResult}, suppose that that we know a rough upper bound $\mu$ on $\|X\|_F/\|X-X_k\|_F$. By modifying the given accuracy parameter, we ensure that with probability at least $1-\delta$, 
\[
\|X-\hPi X\|_\xi \le (1+\epsilon) \|X-X_k\|_\xi ~
\]
where $\|\cdot\|_\xi$ is either the Frobenius or the spectral norm. The runtime overhead relative to the complexity bound in \thmref{thm:mainResult} is logarithmic in $\mu d$.
\end{corollary}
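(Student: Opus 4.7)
The proof will be a rescaling reduction to Theorem~\ref{thm:mainResultW}, exploiting the fact that the runtime in that theorem depends on the accuracy parameter only polylogarithmically. Concretely, I will run the algorithm of Theorem~\ref{thm:mainResultW} with a modified accuracy $\epsilon'$ chosen as a function of $\epsilon$, $\mu$, and (for the spectral case) $d$; since the overhead is $O(\log(1/\epsilon')) - O(\log(1/\epsilon)) = O(\log(\mu d))$, this matches the claimed overhead.

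For the Frobenius case, I take $\epsilon' = \epsilon/\mu$. The guarantee of Theorem~\ref{thm:mainResultW} directly gives
\[
\|X - \hPi X\|_F \le \|X - X_k\|_F + \epsilon' \|X\|_F \le \|X - X_k\|_F + \epsilon \|X - X_k\|_F = (1+\epsilon)\|X - X_k\|_F ,
\]
where the middle inequality uses the assumed bound $\|X\|_F / \|X - X_k\|_F \le \mu$.

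For the spectral case, the naive bound $\|X - \hPi X\|_2 \le \|X - \hPi X\|_F$ is insufficient, because $\|X - X_k\|_F$ can exceed $\|X - X_k\|_2 = \sigma_{k+1}$ by up to a factor of $\sqrt{d-k}$. Instead, I will open up the proof of Theorem~\ref{thm:mainResultW} and appeal to the principal-angle analysis: the projection $\hPi$ is constructed from a subspace $S$ produced by the first stage of Power iteration, which by Theorem~\ref{thm:zhang} satisfies $\tan \theta_k(U_k, S) \le \epsilon'$ whenever the internal accuracy is set to $\epsilon'$. The standard translation from principal-angle bounds to spectral-norm bounds (e.g.\ Wang--Zhang, Section 6.2) then yields $\|X - \hPi X\|_2 \le (1 + O(\epsilon'))\sigma_{k+1}$. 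Choosing $\epsilon' = c\epsilon / (\mu \sqrt{d})$ for a suitable absolute constant $c$ gives the desired $(1+\epsilon)\|X - X_k\|_2$ bound, since $\|X\|_F \le \mu \|X - X_k\|_F \le \mu \sqrt{d-k}\, \sigma_{k+1}$ implies that this value of $\epsilon'$ suffices. The runtime overhead remains $O(\log(\mu d))$.

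The only real obstacle is the spectral-norm case, which I cannot treat Theorem~\ref{thm:mainResultW} as a black box for, since its stated Frobenius guarantee is not tight enough under rescaling alone. However, because the algorithm inherently produces a subspace certified by a principal-angle bound (Theorem~\ref{thm:zhang}) --- and since the authors already employ the Wang--Zhang translation in the proof of Theorem~\ref{thm:mainResultW} itself --- this is essentially a re-parametrization of an argument already in hand, and the $\log(\mu d)$ overhead follows immediately.
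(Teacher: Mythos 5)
Your Frobenius-norm argument is exactly the paper's: rescale the accuracy in \thmref{thm:mainResultW} and absorb the $\epsilon'\|X\|_F$ term using $\|X\|_F \le \mu\|X-X_k\|_F$, with the $O(\log(\mu d))$ overhead coming from the logarithmic dependence on the accuracy. The gap is in the spectral case, where you assert that the Frobenius guarantee of \thmref{thm:mainResultW} cannot be used as a black box and propose to reopen its proof via principal angles. The paper shows the opposite: taking the accuracy of order $\epsilon/(\mu d)$ gives $\|X-\hPi X\|_F \le (1+\epsilon/(3d))\|X-X_k\|_F$; squaring and using $\|X-X_k\|_F^2 \le d\,\|X-X_k\|_2^2$ turns this into the additive bound $\|X-\hPi X\|_F^2 \le \|X-X_k\|_F^2 + \epsilon\|X-X_k\|_2^2$, and Lemma~15 of \cite{musco2015stronger} (an additive squared-Frobenius error bound implies the same additive spectral error bound for the same rank-$k$ projection) immediately yields $\|X-\hPi X\|_2^2 \le (1+\epsilon)\|X-X_k\|_2^2$. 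So the spectral guarantee is a black-box consequence of the Frobenius one after all; this squaring-plus-structural-lemma step is the idea your proposal is missing.

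Beyond missing that route, your substitute argument is not established as written. First, \thmref{thm:zhang} gives per-block angle bounds $\tan\theta_{k_j}(U^{(j)},\tU^{(j)}) \le \epsilon/k$ for separate invocations of \algref{alg:subIter} on deflated or shifted-inverse matrices; it does not provide a single bound $\tan\theta_k(U_k,\tU)\le\epsilon'$ for the concatenated subspace, and the proof of \thmref{thm:mainResultW} deliberately avoids claiming one, working block by block in Frobenius norm instead. Second, even granting such a global angle bound, the angle-to-norm translation loses a term of order $\epsilon'\sigma_1$ (or $\epsilon'\|X\|_F$), not $\epsilon'\sigma_{k+1}$, so your intermediate claim $\|X-\hPi X\|_2 \le (1+O(\epsilon'))\sigma_{k+1}$ is not correct as stated; your choice $\epsilon' = c\epsilon/(\mu\sqrt{d})$ happens to compensate for this, but only if the $\sigma_1$-scale loss is tracked explicitly. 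Third, $\hPi$ is the \emph{Frobenius-optimal} rank-$k$ projection inside the span of $\tU$ (\algref{alg:angleToNormAlg}), so deducing a spectral bound for it from the existence of some good projection in that span requires precisely the kind of Frobenius-to-spectral structural result the paper invokes. None of these obstacles is necessarily fatal, but as written the spectral half of your proof has real gaps, whereas the paper's two-line conversion avoids them entirely.
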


\begin{proof} \textbf{(of \thmref{thm:mainResultW})} \\
\textbf{Correctness: }
Each iteration $j \in [t]$ corresponds to an interval of the form $I_j=\{s_j,\ldots,q_j\}$. For each $j \in [t]$, denote by $k_j=|I_j|$ and let $U^{(j)} \in \reals^{d \times k_j}$ the matrix consisting of the columns $s_j,\ldots,q_j$ of $U$. Using \thmref{thm:zhang} along with the bounds on the condition number established in \corref{cor:largeDetect} and \corref{cor:mainCond}, we see that each time we invoke \algref{alg:subIter}, we obtain $\tU^{(j)}$ with
$$
\tan (\theta_{k_j}(U^{(j)},\tU^{(j)})) \le \epsilon/k~,
$$
Let $\tU = [\tU^{(1)},\ldots,\tU^{(t)} ] \in \reals^{d \times p}$. Our strategy is to show the existence of a rank-$k$ approximation to $X$ in the column space of $\tU$ which satisfies the accuracy requirements. Since we return the optimal rank-$k$ approximation to $X$ in the column space of $\tU$, this will imply the desired bound. 

Recall that we denote by $\cP_k$  the set of all $p \times p$ rank-$k$ projection matrices. Note that for $j=1,\ldots, t-1$, $\tU^{(j)}$ and $U^{(j)}$ have the same number of columns, whereas $\tU^{(t)}$ has $p-k$ more columns than $U^{(t)}$. Let
\[
P = \argmin_{P \in \cP_k} \tan (\theta_{k_t}(U^{(t)},\tU^{(t)}P))
\]
Let $P=ZZ^\top$, where $Z$ has orthonormal columns. For $j < t$, denote $\tPi_j = \tU^{(j)}(\tU^{(j)})^\top$ and let $\tPi_t = (\tU ^{(t)} Z)(\tU^{(t)}Z)^\top$. We now consider the rank-$k$ orthogonal projection 
$$
\tPi = \sum_{j=1}^t \tPi^{(j)}~
$$
Using the triangle inequality we obtain that
\begin{align*}
\|X - \tPi X \|_F &\le \|(I-\tPi) X_{-k}\|_F +  \sum_{i=1}^t \|(I- \sum_{j=1}^t \tPi^{(j)} )U^{(i)} \Sigma^{(i)} V^{(i)}\|_F \\
&\le \|X-X_k\|_F +  \sum_{i=1}^t \|(I- \sum_{j=1}^t \tPi^{(j)} ) U^{(i)} \Sigma^{(i)} V^{(i)}\|_F~,
\end{align*}
where the last inequality follows from the fact that for any matrix $Y$ and any projection matrix $\Pi$, $\|Y\|_F \ge \|\Pi Y\|_F$. We turn to bound each of the summands on the right-hand side. We use the following fact: if $\Pi$ and $\Pi'$ are two $d \times d$ projection matrix such that the range of $\Pi'$ contains the range of $\Pi$, then for any $d \times n$ matrix $M$, $\|M-\Pi' M\|_F \le \|M-\Pi M\|_F$. For each $i \in [t]$ this fact implies that
\[
\|(I- \sum_{j=1}^t \tPi^{(j)} ) U^{(i)}\Sigma^{(i)} V^{(i)}\|_F \le \|(I- \tPi^{(i)} )U^{(i)}  \Sigma^{(i)} V^{(i)}\|_F~.
\]
Using the unitary invariance and the submultiplicativity of the Frobenius norm, we further bound this term by
\begin{align*}
\|(I- \tPi^{(i)} )U^{(i)}  \Sigma^{(i)} V^{(i)}\|_F &= \|(I- \tPi^{(i)} )U^{(i)}  \Sigma^{(i)}\|_F \le \|\Sigma^{(i)} \|\|(I- \tPi^{(i)} )U^{(i)} \|_F \\
&\le \|X\|_F \sin (\theta_{k_i}(U^{(i)},\tU^{(i)})) \le \|X\|_F \tan (\theta_{k_i}(U^{(i)},\tU^{(i)})) \le   \|X\|_F (\epsilon/k)~.
\end{align*}
Combining the bounds above we obtain the claimed bound
\[
\|X - \tPi X \|_F  \le \|X-X_k\|_F + \frac{t \epsilon \|X\|_F}{k} \le \|X-X_k\|_F + \epsilon \|X\|_F~.
\]
\noindent \textbf{Runtime: } We analyze the unaccelerated case. The analysis for the accelerated case is analogous. The main algorithm runs for $t$ iterations, each of which corresponds to a single subinterval. Clearly, $t \le k$. For each subinterval we call Power iteration polylogarithmic number of times. According to \corref{cor:largeDetect} and \corref{cor:mainCond}, for each of this calls, the condition number associated with Power iteration is $\tO(\poly(p))$. This implies the same bound on the number of iterations. When applied to matrices of the form $A_{-(s-1)}$ the complexity per iteration is $O(\nnz(X) \poly(p))$. When applied to shifted inverse matrices of the form $D_{-(s-1)}^{-1}$, the complexity is controlled by the complexity of \svrg. By \corref{cor:mainCond}, this complexity scales with $\tilde{O} (\nnz(X)+d \,\sr(X) G_{k,p+1}^{-2})$.
\end{proof}

\begin{proof} \textbf{(of \corref{cor:mainResult})}
By replacing $\epsilon$ with $\epsilon \mu/(3d)$, we obtain 
\[
\|X-\tPi X\|_F \le (1+\epsilon/(3d))\|X-X_k\|_F
\]
This already gives the desired Frobenius bound. Squaring both sides yields
\[
\|X-\tPi X\|_F^2 \le (1+\epsilon/(3d))^2\|X-X_k\|_F^2 \le (1+\epsilon/d)\|X-X_k\|_F^2 \le \|X-X_k\|_F^2 +\epsilon \|X-X_k\|_2^2~,
\]
Since additive (squared) Frobenius norm bound implies spectral additive norm bound (\cite{musco2015stronger}[Lemma 15]), we obtain
\[
\|X-\tPi X\|^2_2 \le \|X-X_k\|_2^2 +\epsilon \|X-X_k\|_2^2 = (1+\epsilon)\|X-X_k\|_2^2
\]
Taking the square root of both sides yields the desired bound.
\end{proof}

\begin{algorithm}[ht!]
\caption{low-rank Approximation using Adaptive Gap-based Preconditioning}
\label{alg:main}
\begin{algorithmic}
\STATE \textbf{Input: } $X \in \reals^{d \times p}, 1 \le k < p < d$, $ \Delta \le \sigma_k^2 - \sigma_p^2 \le 2\Delta, \epsilon$
\STATE $s=1$, $t=0$, $\epsilon = \epsilon/ (\mu kd)$ where $\mu \ge \|X\|/\|X_{-k}\|$
\WHILE {$q \le k$}
\STATE $I=I_{\textrm{rem}} = \{s,\ldots,k\}$, $t=t+1$
\STATE $X_{-(s-1)} = (I-\sum_{i=1}^{s-1} \tu_i \tu_i^\top)X,~A_{-s+1} = X_{-s+1}X_{-s+1}^\top$
\STATE Apply \algref{alg:musco} with input $(A_{-(s-1)},k-s+1,1/(9p^2))$ to obtain $\{\hlam_i: i \in I\}$
\STATE Run \algref{alg:largeDetect} with the input $(I,\{\hlam_i: i \in I\})$ to obtain $q$
\IF {$q \neq -1$}
\STATE Run \algref{alg:subIter} with $(A_{-(s-1)},q,L = O(p^4 \log(kd/\epsilon)))$ to obtain $\tU^{(t)} = [\tu_s,\ldots, \tu_q]$
\ELSE
\STATE Run \algref{alg:shiftTune} (from \secref{sec:tuneApp}) with the input $(I,\Delta)$ to obtain $\lambda$ 
\STATE Define $D_{-s+1}$ as in (\eqref{eq:deflatedShift})
\STATE Apply \algref{alg:musco} with the input $(D_{-(s-1)},k-s+1,1/(9p))$ to obtain $\{\tlam_i: i \in I\}$
\STATE Run \algref{alg:deltaDetect} with the input $(I,\Delta, \{\tlam: i \in I\})$ to obtain $q$
\STATE If $q=k$ set $p'=p$. Otherwise, set $p'=p$
\STATE Run \algref{alg:subIter} with $(D_{-(s-1)},p',L = O(k \log(kd/\epsilon)))$ to obtain $\tU^{(t)}= [\tu_s,\ldots, \tu_q]$
\ENDIF
\STATE $s=q+1$
\ENDWHILE
\STATE Form the $d \times p$ matrix $\tU = [\tU^{(1)}, \ldots , \tU^{(t)}]$
\STATE Run \algref{alg:angleToNormAlg} with the input $(XX^\top,\tU,k)$ to obtain the final projector $\tPi=\tU \hat{U}_k \hat{U}_k^\top \tU^\top$
\STATE \textbf{Output: } $\tPi$
\end{algorithmic}
\end{algorithm}

\section*{Acknowledgments} 
We thank Ohad Shamir and Alon Cohen for useful discussions. The work is supported by ICRI-CI and by the European Research Council (TheoryDL project).

\newpage 

\bibliography{bib}
\bibliographystyle{plain}

\newpage
\appendix

\appendix

\section{Proofs}

\subsection{Proofs from \secref{sec:largeDetect}}
\begin{proof} \textbf{(of \lemref{lem:largeDetect})}
Suppose that that $\hlam_{i+1} \le \hlam_i (1-p^{-2})$. Then
\begin{align*}
\lambda_{i+1} \le (1-\epsilon')^{-1} \hlam_{i+1} \le (1-\epsilon')^{-1} (1-p^{-2}) \hlam_i \le (1-\epsilon')^{-2} (1-p^{-2}) \lambda_i~.
\end{align*}
Since $\epsilon'<1/3$, $(1-\epsilon')^{-2} \le 1+9\epsilon' \le 1+p^{-2}$. It follows that 
\[
\lambda_{i+1} \le (1+p^{-2}) (1-p^{-2}) \lambda_i = (1-p^{-4}) \lambda_i~.
\]
Conversely, suppose that that $\lambda_{i+1} \le \lambda_i (1-p^{-1})$. Then
\begin{align*}
\hlam_{i+1} \le (1-\epsilon')^{-1} \lambda_{i+1} \le (1-\epsilon')^{-1} (1-p^{-1}) \lambda_i \le (1-\epsilon')^2 (1-p^{-1}) \hlam_i~.
\end{align*}
We already know that $(1-\epsilon')^2 \le 1+9 \epsilon' \le 1+p^{-2} < 1+p^{-1}$. Therefore,
\[
\hlam_{i+1} \le (1+p^{-1}) (1-p^{-1}) \hlam_i = (1-p^{-2}) \hlam_i~.
\]
\end{proof}

\begin{proof} \textbf{(of \corref{cor:largeDetect})}
The first part follows immediately from the first part of \lemref{lem:largeDetect}. If such a gap is not found then it follows from the second part of \lemref{lem:largeDetect} that $\lambda_{i+1} \ge \lambda_i(1-p^{-1})$ for all $i \in I \setminus \{k\}$. Hence, using the equality $1-x \ge \exp(-2x)$ which holds for all $x \in (0,1/2)$, we obtain
\[
\lambda_k \ge (1-p^{-1}) \lambda_{k-1} \ge \ldots \ge (1-p^{-1})^{k-s} \lambda_s \ge (1-p^{-1})^{p} \lambda_s \ge e^{-2} \,\lambda_s \ge \lambda_s/10~. 
\]
\end{proof}

\subsection{Proofs from \secref{sec:deltaDetect}}
We start by deriving upper and lower bounds on $\lambda_i - \lambda_{i+1}$ for all $i \in I \setminus \{k\}$. Using (\ref{eq:approxShiftGap}) together with the fact that $\lambda_i - \lambda_{i+1} = (\lambda - \lambda_{i+1}) - (\lambda-\lambda_i)$, we obtain
\begin{align} \label{eq:upperGap}
\lambda_i-\lambda_{i+1} &\le (1-\epsilon')^{-1} \tlam^{-1}_{i+1} - (1-\epsilon') \tlam^{-1}_i \notag \\
&=  (1-\epsilon')^{-1}(\tlam^{-1}_{i+1} - \tlam^{-1}_i) + ((1-\epsilon')^{-1} - (1-\epsilon')) \tlam_i^{-1}~.
\end{align}
Similarly, we obtain the following lower bound:
\begin{align} \label{eq:lowerGap}
\lambda_i-\lambda_{i+1} & \ge (1-\epsilon') \tlam^{-1}_{i+1} - (1-\epsilon')^{-1} \tlam^{-1}_i \notag \\
&=  (1-\epsilon')(\tlam^{-1}_{i+1} - \tlam^{-1}_i) - ((1-\epsilon')^{-1} - (1-\epsilon')) \tlam_i^{-1}~.
\end{align}
We turn to prove the lemmas.
\begin{proof} \textbf{(of \lemref{lem:deltaDetect})}
Denote by $\mu = (1-\epsilon')^{-1} - (1-\epsilon')$ and note that since $\epsilon' \le 1/9$, $\mu \le 3 \epsilon'$. Next we note that by the minimality of $q$, $\lambda_s - \lambda_q \le k \beta$. Hence, using (\ref{eq:approxShiftGap}) we deduce the bound
\begin{equation} \label{eq:estCumGap}
\tlam_q^{-1} \le (1-\epsilon')^{-1} k \beta~.
\end{equation}
A rearrangement of (\ref{eq:upperGap}) yields
\[
\tlam^{-1}_{q+1} - \tlam^{-1}_q \ge (1-\epsilon')\left((\lambda_q-\lambda_{q+1})-\mu \tlam_q^{-1} \right) \ge  (1-\epsilon')(\beta - \mu \tlam_q^{-1})~.
\]
Together with (\ref{eq:estCumGap}), we obtain
\begin{align*}
\tlam_{q+1}^{-1} - \tlam_q^{-1} &\ge  (1-\epsilon')(\beta -  \mu(1-\epsilon')^{-1} k \beta)  = \beta(1-\epsilon'-\mu k) \ge \beta(1-\epsilon'-3 \epsilon' k) \ge \frac{5}{9} \beta~,
\end{align*}
where we substituted $\epsilon'= \frac{1}{9p}< 1/9$.
\end{proof}

\begin{proof} \textbf{(of \lemref{lem:deltaSound})}
We follow the notation from the previous proof. The partitioning rule (\algref{alg:deltaDetect}) implies that 
$$
\tlam^{-1}_{q+1} - \tlam^{-1}_q \ge \frac{5}{9} \Delta~.
$$
According to the lower bound (\ref{eq:lowerGap}), we have
\[
\lambda_q- \lambda_{q+1} \ge (1-\epsilon')(\tlam^{-1}_{q+1} - \tlam^{-1}_q) - \mu \tlam_q^{-1} \ge \frac{8}{9} \cdot \frac{5}{9} \Delta - \mu \tlam_q^{-1}~,
\]
where we substituted $\epsilon' = \frac{1}{9p} \le \frac{1}{9}$. We proceed to bound $\tlam_q^{-1}$ from above. Recalling the assumption $\tlam_s^{-1} \le \frac{\Delta}{5}$ and using the minimality\footnote{Note that by the minimality of $q$, each of the summands below are smaller than $\frac{5}{9}\Delta $.} of $q$, we obtain
\[
\tlam_q^{-1} = \tlam_s^{-1} + (\tlam_{s+1}^{-1} - \tlam_s^{-1}) + \ldots + (\tlam_q^{-1} - \tlam_{q-1}^{-1}) \le \Delta \left(\frac{1}{100} + \frac{5k}{9} \right) \le \frac{2k}{3}  \Delta~.
\]
Combining the bounds yields
\[
\lambda_q- \lambda_{q+1} \ge \frac{8}{9} \cdot \frac{5}{9} \Delta - 3 \epsilon' \cdot \frac{2k}{3}  \Delta \ge \Delta \left( \frac{40}{81} - \frac{2}{9}\right) \ge \Delta/9~,
\]
where we used again the fact that $\mu \le 3\epsilon'$ and substituted $\epsilon'$.
\end{proof}

\begin{proof} \textbf{(of \corref{cor:mainCond})}
We first establish the claimed upper bound on $\tG^{-1}$. Consider first the case where $q<k$. Note that
\begin{align*}
\tG^{-1} &= \frac{\frac{1}{\lambda-\lambda_q}}{\frac{1}{\lambda-\lambda_q} - \frac{1}{\lambda-\lambda_{q+1}}}  = \frac{\lambda-\lambda_{q+1}}{\lambda_q-\lambda_{q+1}}  = \frac{\lambda-\lambda_s}{\lambda_q-\lambda_{q+1}} + \frac{\lambda_s-\lambda_q}{\lambda_q-\lambda_{q+1}} + \underbrace{\frac{\lambda_q-\lambda_{q+1}}{\lambda_q-\lambda_{q+1}}}_{=1}
\end{align*}
According to \lemref{lem:deltaSound}, $\lambda_q-\lambda_{q+1} \ge \Delta/9$. Also, by assumption $\lambda-\lambda_s \le 2\Delta/9$. Finally, using the minimality of $q$ together with \lemref{lem:deltaDetect}, we obtain that $\lambda_s - \lambda_q \le (q-s) \Delta \le k \Delta$. Combining the bounds, we see that
\[
\tG^{-1} = O(k) = O(p)~.
\]
The same bound applies for the case where $q=k$; the bound $\lambda_q-\lambda_{q+1} \ge \Delta/9$ is replaced by the bound $\lambda_k - \lambda_{p+1} \ge \Delta$ (by assumption) and the same upper bounds hold (by exactly the same arguments).

We proceed to bound the complexity of \svrg. Note that the leading eigenvalue in our case is $\lambda_s$. Also, multiplication with $X_{-(s-1)}$ can be done in time $\nnz(X) k$ (first multiply by $X$ and then project). Using \thmref{thm:svrgSolve} we obtain that each matrix multiplication costs
\[
\tilde{O} \left( \left(\nnz(X)+ d \sr(X) \frac{\lambda_s^2}{(\lambda-\lambda_s)^2}  \right ) k \right)~.
\]
By assumption $\lambda-\lambda_s=a \Delta$ for some constant $a$. Next, we recall that we resort to shifted-inverse preconditioning only if we did not find a multiplicative gap of order $1/p$. It follows from \corref{cor:largeDetect} that $\lambda_s \le 10 \lambda_k$. Multiplying the right term in the above bound by the constant $\lambda_k^2/\lambda_s^2$, we see that the complexity of \svrg~is at most
\[
\tilde{O} \left( \left(\nnz(X)+ d \sr(X) \frac{\lambda_k^2}{\Delta^2}  \right ) k \right)
 = \tilde{O} ((\nnz(X)+d \,\sr(X_{-(s-1)}) G_{k,p+1}^{-2})k)~,
\]
where we substituted $\lambda_k/\Delta = \Theta(G_{k,p+1}^{-1})$. The analysis for the accelerated solver is analogous.
\end{proof}

\section{Tuning the shift parameter} \label{sec:tuneApp}
Recall that the initial shift parameter satisfies $\lambda-\lambda_s \in [\Delta, \lambda_s]$. Applying the Power iteration to $D_{-s+1}^{-1}$ with target dimension $1$ and $\epsilon'=1/10$ yields $\tlam^{-1}_s$ which satisfies (\ref{eq:approxShiftGap}). Hence, initially, $\tlam^{-1}_s \in [\frac{9}{10} \Delta, \frac{10}{9} \lambda_s]$, i.e., $\lambda_s^{-1}$ does not lie in the desired range $[\Delta/27,\Delta/5]$. As we formally prove below, by iteratively performing an update of the form $\lambda_+ = \lambda - \frac{(1-\epsilon')}{2} \tlam_s^{-1}$ and re-estimating $\lambda_+-\lambda_s$, we ensure that both $\lambda_+-\lambda_s$ and its corresponding estimate $(\tlam^{-1}_s)_+$ decrease by a constant factor. Furthermore, the constants we chose ensure that $\tlam^{-1}_s$ will eventually fall into the desired range $[\Delta/27,\Delta/5]$. The procedure is detailed in \algref{alg:shiftTune} and its correctness follows from the following lemma.
\begin{lemma} \label{lem:shiftTune}
\algref{alg:shiftTune} terminates after at most $O(\log(\lambda_s/\Delta))$ iterations. Upon termination, $\lambda-\lambda_s$ and the corresponding estimate $\tlam^{-1}_s$ satisfy \eqref{eq:shiftClose}.
\end{lemma}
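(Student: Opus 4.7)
\textbf{(Proof proposal for \lemref{lem:shiftTune}.)}
The plan is to track the single nonnegative scalar $x := \lambda-\lambda_s$ through the updates $\lambda\mapsto \lambda_+=\lambda-\tfrac{1-\epsilon'}{2}\,\tlam_s^{-1}$ and show (i) $x$ contracts by a constant factor per iteration, (ii) $x$ remains positive (so the next \svrg/\algref{alg:musco} call is well-posed), and (iii) the algorithm cannot ``skip over'' the target window for $\tlam_s^{-1}$. Throughout we use $\epsilon'=1/10$ together with the approximation guarantee
\[
(1-\epsilon')\, x \;\le\; \tlam_s^{-1} \;\le\; (1-\epsilon')^{-1} x
\]
inherited from \thmref{thm:musco} applied to $D_{-(s-1)}^{-1}$ with target dimension~$1$.

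Step~1 (contraction). Plugging the two-sided bound on $\tlam_s^{-1}$ into the update formula immediately yields
\[
x_+ \;=\; x-\tfrac{1-\epsilon'}{2}\,\tlam_s^{-1} \;\in\; \Bigl[\,\tfrac{1}{2}\,x,\;\bigl(1-\tfrac{(1-\epsilon')^2}{2}\bigr)x\,\Bigr]
\;=\;\Bigl[\tfrac{1}{2}x,\;\tfrac{119}{200}x\Bigr].
\]
The lower bound $x_+\ge x/2>0$ proves $\lambda_+>\lambda_s$, so the inner call to \algref{alg:musco} on $D_{-(s-1)}^{-1}$ is justified at every iteration. The upper bound $x_+\le\tfrac{119}{200}x$ gives the geometric decay needed for the iteration count: starting from $x_0\le\lambda_s$, after $T=O(\log(\lambda_s/\Delta))$ steps we have $x_T\le \tfrac{9\Delta}{50}$, and so $\tlam_{s,T}^{-1}\le \tfrac{10}{9}x_T\le \Delta/5$, which is the upper half of (\ref{eq:shiftClose}).

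Step~2 (the window is not overshot). This is the place that requires a careful constant check and is the main obstacle of the proof: one must rule out that a single update plunges the estimate from above $\Delta/5$ to below $\Delta/27$. Assume the algorithm has not yet terminated at the current iteration. Since the initial estimate satisfies $\tlam_s^{-1}\ge \tfrac{9}{10}\Delta>\Delta/5$ and, by Step~1, $\tlam_s^{-1}$ decreases monotonically across iterations (up to the constant-factor cushion in the approximation, which I will treat as part of the same monotone envelope), the only way to fail the termination test is $\tlam_s^{-1}>\Delta/5$. That forces $x>\tfrac{9\Delta}{50}$, hence by Step~1 $x_+\ge x/2>\tfrac{9\Delta}{100}$, and therefore
\[
\tlam_{s,+}^{-1} \;\ge\; \tfrac{9}{10}\,x_+ \;>\; \tfrac{81\,\Delta}{1000} \;>\; \tfrac{\Delta}{27}.
\]
Thus the estimate produced at the next iteration never falls below $\Delta/27$, so the first iteration $T$ at which $\tlam_s^{-1}\le \Delta/5$ automatically satisfies $\tlam_s^{-1}\ge \Delta/27$ as well.

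Step~3 (conclusion). Combining Steps~1 and~2, the algorithm terminates after at most $O(\log(\lambda_s/\Delta))$ iterations with $\Delta/27\le\tlam_s^{-1}\le \Delta/5$; applying the bound $(1-\epsilon')\tlam_s^{-1}\le \lambda-\lambda_s\le (1-\epsilon')^{-1}\tlam_s^{-1}$ with $\epsilon'=1/10$ then gives $\tfrac{\Delta}{30}\le \lambda-\lambda_s\le \tfrac{2\Delta}{9}$, which is exactly (\ref{eq:shiftClose}). The only arithmetic to verify carefully is the chain of inequalities in Step~2 that shows the window $[\Delta/27,\Delta/5]$ is wide enough (ratio $27/5=5.4$) relative to the one-step contraction bounded by $2$, so that no single update can jump across it; everything else is direct substitution.
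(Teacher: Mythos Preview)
Your proof is correct and follows essentially the same approach as the paper: track $x=\lambda-\lambda_s$, show it contracts geometrically (the paper gets $x_+\in[x/2,\,3x/4]$ with $\epsilon'=1/9$, you get $[x/2,\,119x/200]$ with $\epsilon'=1/10$), and then verify that the one-step contraction cannot skip over the target window for $\tlam_s^{-1}$. The paper phrases the no-overshoot step slightly differently---it first derives the explicit envelope $(\tlam_s^{-1})_+\in[\tlam_s^{-1}/4,\,25\tlam_s^{-1}/27]$ and then notes $\tfrac14\cdot\tfrac{\Delta}{5}=\tfrac{\Delta}{20}>\tfrac{\Delta}{27}$---but your route through $x$ and back is equivalent. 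Your use of $\epsilon'=1/10$ (rather than the $1/9$ in \algref{alg:shiftTune}) is harmless and in fact matches the constants used in the implication (\ref{eq:shiftClose}); the paper itself is not fully consistent on this point.
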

\begin{algorithm}
\caption{Shift tuning}
\label{alg:shiftTune}
\begin{algorithmic}
\STATE \textbf{Input: } $I = \{s,\ldots,k\},~\Delta$
\STATE Apply \algref{alg:musco} with the input $(A_{-s+1},1, 1/4)$ to obtain $\hlam_s$
\STATE Set $\lambda = (1+1/2) \hlam_s$, 
\STATE $\tlam^{-1}_s = +\infty$
\WHILE {$\tlam^{-1}_s \notin [\Delta/27,\Delta/25]$}
\STATE Define $D_{-s+1}$ as in (\ref{eq:deflatedShift})
\STATE Run \algref{alg:musco} with the input parameters $(D_{-(s-1)}^{-1} , 1,\epsilon'=1/9)$ to obtain $\tlam^{-1}_s$
\STATE $\lambda = \lambda - \frac{(1-\epsilon')}{2} \tlam_s^{-1}$
\ENDWHILE
\STATE Return $\lambda$
\end{algorithmic}
\end{algorithm}

\begin{proof} \textbf{(of \lemref{lem:shiftTune})}
As mentioned above, the assumption on the initial shift and the approximation guarantees imply that the initial estimate $\tlam^{-1}_s \ge 9\Delta/10 > \Delta/5$, i.e., $\tlam^{-1} \notin [\Delta/27,\Delta/5]$. Denote $\lambda_+ = \lambda-\frac{(1+\epsilon)^{-1}}{2} \tlam^{-1}_s$. The following inequalities indicate that we preserve the positivity of the gap while decreasing it by a multiplicative constant factor:
\[
\lambda_+ - \lambda_s  = \lambda-\lambda_s-\frac{(1-\epsilon')}{2}  \tlam_s^{-1} \ge \lambda-\lambda_s-\frac{1}{2} (\lambda-\lambda_s) = \frac{\lambda-\lambda_s}{2}~,
\]
\[
\lambda_+ - \lambda_s = \lambda-\lambda_s-\frac{(1-\epsilon')}{2}  \tlam_s^{-1} \le (\lambda-\lambda_s) (1-(1-\epsilon')^2/2) \le  (\lambda-\lambda_s)(1-81/200) \le \frac{3}{4} (\lambda-\lambda_s)~,
\]
where we substituted $\epsilon'=1/9$. 

It is left to verify that $\tlam^{-1}_s$ also decreases by a constant factor at each iteration and that it eventually falls into the desired range (the bound on the number of iterations will follow from the first two claims). Indeed, in the next step the algorithm updates the deflated inverse matrix $D_{-s+1}$ with the new shift parameter $\lambda_+$ and invokes \algref{alg:musco} to $D_{-s+1}^{-1}$ to obtain a new estimate $(\tlam^{-1}_s)_+$ which satisfies
\begin{align*}
(\tlam^{-1}_s)_+ &\in \left [\frac{9}{10} (\lambda_+-\lambda_s),\frac{10}{9}(\lambda_+-\lambda_s)  \right] \subseteq \left[\frac{9}{10} \cdot \frac{\lambda-\lambda_s}{2} , \frac{10}{9} \cdot \frac{3(\lambda-\lambda_s)}{4}  \right] \\
&  \subseteq \left[\frac{81}{100} \cdot \frac{\tlam^{-1}_s}{2} , \frac{100}{81} \cdot \frac{3\tlam^{-1}_s}{4}  \right]  \subseteq  \left[\frac{\tlam^{-1}_s}{4}, \frac{25 \tlam_s^{-1}}{27} \right]~.
\end{align*}
The claimed multiplicative decrease is apparent. Moreover, it is seen that if $\tlam^{-1}_s > \Delta/5$, then 
$$
(\tlam^{-1}_s)_+ \ge \frac{1}{4} \frac{\Delta}{5} = \frac{\Delta}{20} > \frac{\Delta}{27}~.
$$
This completes our proof.
\end{proof}

\section{Principal Angles}
\begin{definition} \label{def:principalAngle}
Let $\cX$ and $\cY$ be two subspaces of $\reals^d$ of dimension at least $k$. The principal angles $0 \le \theta_1 \le \ldots \theta_k \le \frac{\pi}{2}$ between $\cX$ and $\cY$ and the corresponding principal pairs $(x_i,y_i)_{i=1}^k$ are defined by
\[
\theta_i  = \arccos(x_i^\top y_i) := \min \{\arccos (x^\top y): x \in \cX,~y \in \cY, \|x\|=1,\|y\|=1,~x \perp \{x_1,\ldots,x_{i-1}\},~x \perp \{y_1,\ldots,y_{i-1}\}\}~.
\]
The principal angles between matrices (whose columns are of the same size) are defined as the principal angles between their ranges.
\end{definition}
Following \cite{hardt2014noisy}, we use the following non-recursive expression.
\begin{lemma} \label{lem:nonRecursiveAngle}
Let $k \le p \le d$. Suppose that $S \in \reals^{d \times p}$ is a matrix of a full column rank and let $U = [U_k, U_{-k}] \in \cO^{d \times d}$. Then, 
\[
\tan(\theta_k(U_k,S)) = \min_{\Pi \in \cP_k} \max_{w: \Pi w=w } \frac{\|U_k^\top Sw\|}{\|U_{-k}^\top Sw\|}~, 
\]
\end{lemma}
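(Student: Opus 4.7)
The plan is to rewrite the outer optimization over $\Pi \in \cP_k$ as one over $k$-dimensional subspaces of $\Span(S)$. A rank-$k$ orthogonal projection $\Pi$ is determined by its range $W = \range(\Pi)$, a $k$-dim subspace of $\reals^p$, and the condition $\Pi w = w$ is equivalent to $w \in W$. Because the ratio $\|U_k^\top Sw\|/\|U_{-k}^\top Sw\|$ depends on $w$ only through $Sw$, setting $M = SW = \{Sw : w \in W\}$ produces a $k$-dim subspace of $\Span(S)$ (using that $S$ has full column rank). The map $W \mapsto SW$ is then a bijection from $k$-dim subspaces of $\reals^p$ onto $k$-dim subspaces of $\Span(S)$, so the right-hand side rewrites as $\min_{M \subset \Span(S),\ \dim M = k}\ \max_{y \in M \setminus \{0\}}\ \|U_k^\top y\|/\|U_{-k}^\top y\|$.

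For each fixed $M$, I would pick an orthonormal basis $V_M \in \reals^{d \times k}$ and write $y = V_M \xi$ with $\xi \in \reals^k$. The SVD $U_k^\top V_M = P \Sigma_M Q^\top$ (a $k \times k$ factorization) has singular values equal to $\cos \theta_i(U_k, M)$ for $i = 1, \ldots, k$, and $\|U_k^\top V_M \xi\|^2 + \|U_{-k}^\top V_M \xi\|^2 = \|\xi\|^2$ because $V_M$ is orthonormal. Changing variables via $\xi = Qz$, the inner ratio becomes $\sum_i \cos^2\theta_i(U_k, M)\, z_i^2 \,/\, \sum_i \sin^2\theta_i(U_k, M)\, z_i^2$, which can be extremized explicitly in terms of the $\theta_i(U_k, M)$. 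The outer minimization over $M$ is then handled using the monotonicity identity $\theta_k(U_k, \Span(S)) = \min_{M \subset \Span(S),\, \dim M = k}\theta_k(U_k, M)$ together with the observation that the natural optimal candidate is $M^\star = V Q_k$, where $V$ is an orthonormal basis of $\Span(S)$ and $Q_k$ collects the first $k$ right singular vectors of $U_k^\top V$; for this $M^\star$ the principal angles $\theta_i(U_k, M^\star)$ coincide with the principal angles $\theta_i(U_k, \Span(S))$.

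The main obstacle will be reconciling the direction of the outer min and inner max with the numerator/denominator orientation written in the statement. A direct diagonalization of the inner ratio in the $Q$-basis for a fixed $M$ is dominated by the coordinate with the largest $\cot \theta_i(U_k, M)$, which is driven by the \emph{smallest} principal angle, so isolating $\tan\theta_k$ (the \emph{largest}) rather than $\cot\theta_1$ requires a careful argument that the outer $\min$ over $M$ eliminates the small-angle contributions. I would tackle this by evaluating the min--max explicitly at $M^\star$, where one can read off the answer from the SVD of $U_k^\top V$, and then by a Courant--Fischer/interlacing argument applied to $U_k^\top V_M$ show that any other $M$ yields a value at least as large; this should force the min--max to collapse to the claimed $\tan \theta_k(U_k, \Span(S))$, pinning down the orientation along the way.
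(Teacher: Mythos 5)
Your overall skeleton is the right one: trading the min over $\Pi\in\cP_k$ for a min over $k$-dimensional subspaces $M\subseteq\Span(S)$ (valid since $S$ has full column rank), diagonalizing the inner ratio through the SVD of $U_k^\top V_M$, and closing the outer minimization with the interlacing inequality $\theta_i(U_k,M)\ge\theta_i(U_k,\Span(S))$ with equality at the principal-vector subspace $M^\star=VQ_k$. (For what it is worth, the paper supplies no proof at all here --- the lemma is imported from Hardt--Price --- so this is indeed the argument one has to make.) The genuine gap is the step you defer at the end: no ``careful argument'' can make the min--max with the orientation as printed collapse to $\tan\theta_k$, because the printed identity is false; the fraction is inverted relative to Hardt--Price. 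Your own diagonalization already shows why: for fixed $M$ the inner maximum of $\|U_k^\top Sw\|/\|U_{-k}^\top Sw\|$ equals $\cot\theta_1(U_k,M)$, the cotangent of the \emph{smallest} principal angle. Minimizing over $M$ then yields $\cot\bigl(\max_M\theta_1(U_k,M)\bigr)$, and by Courant--Fischer applied to $U_k^\top V_M=(U_k^\top V)(V^\top V_M)$ this equals $\cot\theta_{p-k+1}(U_k,\Span(S))$ when $p\le 2k-1$, and $0$ when $p\ge 2k$ (take $M$ inside the kernel of $U_k^\top V$) --- never $\tan\theta_k$ in general. Already for $p=k$ the min over $\Pi$ is vacuous: with $U_k=[e_1,e_2]$, $U_{-k}=[e_3,e_4]$ in $\reals^4$ and $S=[e_1,(e_2+e_3)/\sqrt{2}]$, the right-hand side as printed is unbounded while $\tan\theta_2(U_k,S)=1$.

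The fix is to correct the statement rather than force it: the identity should read $\tan(\theta_k(U_k,S))=\min_{\Pi\in\cP_k}\max_{w:\,\Pi w=w}\|U_{-k}^\top Sw\|/\|U_k^\top Sw\|$. With that orientation your machinery finishes immediately: in the $Q$-basis the inner ratio is $\bigl(\sum_i\sin^2\theta_i(U_k,M)\,z_i^2\,/\,\sum_i\cos^2\theta_i(U_k,M)\,z_i^2\bigr)^{1/2}$, whose maximum over $z$ is $\tan\theta_k(U_k,M)$, the tangent of the \emph{largest} principal angle of the $k$-dimensional pair; then $\min_M\tan\theta_k(U_k,M)=\tan\theta_k(U_k,\Span(S))$ by your interlacing inequality, with equality at $M^\star$. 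Modulo the harmless convention $\tan(\pi/2)=\infty$ when $U_k^\top V_M$ is singular, that completes the proof; as written, though, your plan promises to establish a claim that a two-line counterexample refutes, so the orientation correction is not a presentational detail but the missing ingredient.
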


\end{document}